\definecolor{webgreen}{rgb}{0,.5,0}
\definecolor{webbrown}{rgb}{.6,0,0}
\newcommand{\seqnum}[1]{\href{https://oeis.org/#1}{\rm \underline{#1}}}
\pgfplotsset{compat=1.18}
\newcommand{\red}[1]{{\color{red} #1}}
\newcommand{\freq}{\mathrm{freq}}
\begin{document}

\vspace*{1cm} 


\theoremstyle{plain}
\newtheorem{theorem}{Theorem}
\newtheorem{corollary}[theorem]{Corollary}
\newtheorem{lemma}[theorem]{Lemma}
\newtheorem{proposition}[theorem]{Proposition}

\theoremstyle{definition}
\newtheorem{definition}[theorem]{Definition}
\newtheorem{example}[theorem]{Example}
\newtheorem{conjecture}[theorem]{Conjecture}

\theoremstyle{remark}
\newtheorem{remark}[theorem]{Remark}

\numberwithin{equation}{section}
\numberwithin{theorem}{section}
\numberwithin{figure}{section}

\begin{center}
\vskip 1cm{\LARGE\bf
Pointwise Order of Generalized Hofstadter Functions $G$, $H$, and Beyond\\
}
\vskip 1cm
\large
    Pierre Letouzey\\
    IRIF \\
    Universit\'e Paris Cit\'e, CNRS, Inria\\
    F-75013 Paris\\
    France\\
    \href{mailto: letouzey@irif.fr}{\tt letouzey@irif.fr}\\
    \ \\
    Shuo Li\\
    Hangzhou International Innovation Institute \\
    Beihang University \\
    166 Shuanghongqiao Street \\
    Pingyao Town, Yuhang District, Hangzhou, 311115\\
    China \\
    \href{mailto: shuoli@buaa.edu.cn}{\tt shuoli@buaa.edu.cn}\\
    \ \\
    Wolfgang Steiner\\
    IRIF\\
    Universit\'e Paris Cit\'e, CNRS\\
    F-75013 Paris\\
    France\\
    \href{mailto: steiner@irif.fr}{\tt steiner@irif.fr}\\
\end{center}

\vskip .2 in
\begin{abstract}
    Hofstadter's G function is recursively defined via $G(0)=0$ and then $G(n)=n-G(G(n-1))$. Following Hofstadter, we vary the number $k$ of nested recursive calls in this equation and obtain a family of functions $(F_k)$.
    Here we establish that this family is ordered pointwise: for all $k$ and $n$, we have $F_k(n) \le F_{k+1}(n)$. To achieve this, we make a detour via infinite morphic words generalizing the Fibonacci word. We prove various properties of these words, concerning the lengths of substituted prefixes of these words and the number of occurrences of specific letters in these prefixes. We also relate the limits of $\frac{1}{n}F_k(n)$ to the frequencies of letters in the considered words. We provide a certified formalization of all these results in the Rocq proof assistant.
\end{abstract}

\section{Introduction}

\subsection{The functions}

For each integer $k \ge 1$, we recursively define the
function\footnote{In all this work, the
$k$-th exponent of a function denotes its $k$-th iterate.
For instance, $F_2^2(n)$ is $F_2(F_2(n))$, not the square of $F_2(n)$.}
\[
F_k:\, \mathbb{N} \to \mathbb{N}, \quad n \mapsto \begin{cases}0, & \mbox{if}\ n = 0; \\ n-F_k^k(n-1), & \mbox{otherwise}.\end{cases}
\]
The function $F_k$ is well defined, since one may prove that
$0 \le F_k(n) \le n$ for all $n \ge 0$. This family of functions is
due to Hofstadter \cite[Chap.~5]{GEB}.
In particular, $F_2$ is Hofstadter's
function~$G$ (see OEIS entry~\seqnum{A005206}
\cite{OEIS,GEB,DowneyGriswold84,GaultClint}). Note that
$G(n)=\lfloor (n+1)/\varphi \rfloor$ where $\varphi$ is the
golden ratio $(1+\sqrt{5})/2$. Similarly, $F_3$ is
Hofstadter's function~$H$ (see OEIS \seqnum{A005374}),
 and the generalization to
higher degrees of recursive nesting in the definition of~$F_k$
was already suggested by Hofstadter~\cite{GEB}.
To be complete, the OEIS database already includes
$F_4$ as \seqnum{A005375},
$F_5$ as \seqnum{A005376}, and
$F_6$ as \seqnum{A100721}. On the other hand, we
choose to start this sequence with~$F_1$, where only one recursive call
is done, leading to a function that can easily be shown to verify
$F_1(n)=\lfloor (n+1)/2 \rfloor=\lceil n/2 \rceil$.
Throughout this paper, we never consider the case $k=0$:
although the previous definition could be extended and gives a
non-recursive function~$F_0$, this $F_0$
has too little in common with the other~$F_k$ functions to be
of much interest.

\begin{figure}[ht]
\begin{center}
\pgfplotsset{width=\linewidth}
\begin{tikzpicture}[scale=.9]
  \begin{axis}[
    xmin=0,xmax=30,ymin=0,ymax=25,samples=31,
    xtick = {0,5,...,30},
    ytick = {0,5,...,30},
    legend pos=south east
  ]
 \addplot+[mark=.,color=black,style=dashed,domain=0:30] {x};
 \addlegendentry{id \phantom{=G}}
    \addplot [mark=.,color=blue] coordinates {
(0, 0) (1, 1) (2, 1) (3, 2) (4, 3) (5, 4) (6, 5) (7, 6) (8, 6)
 (9, 7) (10, 7) (11, 8) (12, 9) (13, 9) (14, 10) (15, 11) (16, 12)
 (17, 12) (18, 13) (19, 14) (20, 15) (21, 16) (22, 16) (23, 17)
 (24, 18) (25, 19) (26, 20) (27, 21) (28, 21) (29, 22) (30, 23)};
 \addlegendentry{$F_5 \phantom{=G}$}
    \addplot [mark=.,color=purple] coordinates {
(0, 0) (1, 1) (2, 1) (3, 2) (4, 3) (5, 4) (6, 5) (7, 5) (8, 6)
 (9, 6) (10, 7) (11, 8) (12, 8) (13, 9) (14, 10) (15, 11) (16, 11)
 (17, 12) (18, 13) (19, 14) (20, 15) (21, 15) (22, 16) (23, 17)
 (24, 18) (25, 19) (26, 19) (27, 20) (28, 20) (29, 21) (30, 22)};
 \addlegendentry{$F_4 \phantom{=G}$}
    \addplot [mark=.,color=orange] coordinates {
(0, 0) (1, 1) (2, 1) (3, 2) (4, 3) (5, 4) (6, 4) (7, 5) (8, 5)
 (9, 6) (10, 7) (11, 7) (12, 8) (13, 9) (14, 10) (15, 10) (16, 11)
 (17, 12) (18, 13) (19, 13) (20, 14) (21, 14) (22, 15) (23, 16)
 (24, 17) (25, 17) (26, 18) (27, 18) (28, 19) (29, 20) (30, 20) };
 \addlegendentry{$F_3=H$}
 \addplot+[mark=.,color=red,style=solid,domain=0:30]
 {floor((x+1) * 0.618033988749894903)};
 \addlegendentry{$F_2=G$}
 \addplot+[mark=.,color=black,style=solid,domain=0:30] {ceil(x/2)};
 \addlegendentry{$F_1 \phantom{=G}$}
 \end{axis}
\end{tikzpicture}
\end{center}
\caption{Plotting $F_1,F_2,\ldots,F_5$}
\label{fig:plot}
\end{figure}

\subsection{A monotonicity problem}

Small values of the functions $F_1$ to~$F_5$ are displayed in
Figure~\ref{fig:plot}. From this figure, one may easily guess
that $F_1$ is everywhere below or equal to~$F_2$,
similarly that $F_2$ is below or equal to~$F_3$, and so forth. Indeed, our main result is the following theorem.

\begin{theorem} \label{t:2}
For all $k \ge 1$, $n \ge 0$, we have $F_k(n) \le F_{k+1}(n)$.
\end{theorem}

This property
seems to defy any attempt to prove it directly via induction on the functions~$F_k$. Therefore, a~different approach is used.
We relate each function~$F_k$ to a morphic word~$x_k$ and turn the
function comparison into an equivalent statement about these words.
This gives Theorem~\ref{t:2a} below, which contains Theorem~\ref{t:2} as a special case.

\subsection{Substitutions and morphic words}
\label{subs:substitution}

For $k \ge 1$, let $\tau_k$ be the substitution
 on the alphabet $\{1,2,\dots,k\}$ defined by 
\[
\begin{aligned}
\tau_k:\, k &\mapsto k\,1, \\
i & \mapsto i+1, \quad \mbox{for}\ 1 \le i < k,
\end{aligned}
\]
and let $x_k = x_k[0] x_k[1] \cdots \in \{1,2,\dots,k\}^\infty$ be the
fixed point of~$\tau_k$.
Here, a~substitution (or morphism) on an alphabet $A$ is a map $\tau: A^* \to A^*$ satisfying $\tau(uv) = \tau(u) \tau(v)$ for all $u,v \in A^*$, where $A^*$ denotes the set of finite words with letters in~$A$.
The map $\tau$ is therefore defined by its value on the letters of~$A$, and it is extended in a natural way to infinite words (or sequences) $w = w[0] w[1] \cdots \in A^\infty$ by setting $\tau(x) = \tau(x[0]) \tau(x[1]) \cdots$.
For more on substitutions, we refer to Pytheas Fogg~\cite{Pytheas}.

\begin{figure}[ht]
\begin{align*}
x_2 &= \red{2}1\red{2}\red{2}1\red{2}1\red{2}\red{2}1\red{2}\red{2}1\red{2}1\red{2}\red{2}1\red{2}1\red{2}\red{2}1\red{2}\red{2}1\red{2}1\red{2}\red{2}1\red{2}\red{2}1\red{2}1\red{2}\red{2}1\red{2}\cdots \\
x_3 &= \red{3}12\red{3}\red{3}1\red{3}12\red{3}12\red{3}\red{3}12\red{3}\red{3}1\red{3}12\red{3}\red{3}1\red{3}12\red{3}12\red{3}\red{3}1\red{3}12\red{3}12\cdots \\
x_4 &= \red{4}123\red{4}\red{4}1\red{4}12\red{4}123\red{4}123\red{4}\red{4}123\red{4}\red{4}1\red{4}123\red{4}\red{4}1\red{4}12\red{4}123\cdots \\
x_5 &= \red{5}1234\red{5}\red{5}1\red{5}12\red{5}123\red{5}1234\red{5}1234\red{5}\red{5}1234\red{5}\red{5}1\red{5}1234\red{5}\cdots
\end{align*}
\caption{Infinite words $x_2,\ldots,x_5$ (with largest letter in red)}
\label{f:xk}
\end{figure}

These substitutions and words are not novel.
For instance, $\tau_1$ operates on the singleton alphabet~$\{1\}$ in such a way that $\tau_1(1)=11$, and hence $x_1 = 1^\infty$. It is the
only case where $x_k$ is ultimately periodic.
For $k=2$, we recover the well-known Fibonacci substitution and word.\footnote{Usually, the letter~$k$ is replaced by the letter~$0$ in the definition of $\tau_k$ and~$x_k$. We choose the letter~$k$ in order to simplify some formulas and to extend the definition without modification to the case $k=1$.}
The substitution~$\tau_3$ already appears (up to letter renaming) as $\sigma(1,0)$ in Pytheas Fogg~\cite[Ex.~8.1.2]{Pytheas}, as an example of a~\emph{modified Jacobi--Perron substitution}.
More generally, the substitutions~$\tau_k$ and words~$x_k$ can be associated with the R\'enyi
expansion of~1 in base~$\beta_k$, where $\beta_k$ is the positive zero of
$X^k - X^{k-1} - 1$, and were studied by Frougny, Mas\'akov\'a and Pelantov\'a~\cite{FrougnyMP04}. 
When $k\ge 2$, this expansion can be written as $1(0)^{k-2}1$. As a consequence,
the factor complexity of~$x_k$ is $n \mapsto (k-1) n+1$; in particular, $x_2$ is Sturmian \cite{FrougnyMP04}.

\subsection{New results}

In this article, in addition to the aforementioned monotonicity of~$F_k$ (Theorem~\ref{t:2}), we exhibit several relations between the
functions~$F_k$ and the words~$x_k$.
For instance, for $k\ge 1$ and a position $n\ge 0$,
the value of $x_k[n]$ is either the first index $j\le k$
such that $F_k^j(n+1) = F_k^j(n)$, or $k$ if there is no such small
index; see Proposition~\ref{diffFx} below.
In particular, $x_k[n]$ is the letter~$1$ exactly when
$F_k(n+1) = F_k(n)$, and this also leads to the fact that
$F_k(n)$ is equal to the number of non-$1$ letters in the
first $n$ letters of~$x_k$; see Proposition~\ref{FC1}.
Such properties were already known in the case $k=2$, where
$F_2 = G = \lfloor (n+1)/\varphi \rfloor$ and $x_2$ amounts to the Fibonacci
word; see, for instance, Allouche and Shallit \cite[Ex.~8.1.3]{allouche_shallit_2003}.
The extended results to an arbitrary parameter~$k$ seem new.

All these relations between the functions~$F_k$ and the words~$x_k$
stem from the key Theorem~\ref{t:1} and its Corollary~\ref{galois}:
$F_k$~and their iterates admit Galois connections~\cite{Erne1993} (i.e., almost
inverses) that can be simply expressed in terms of the length of
$\tau_k$-substituted prefixes of~$x_k$. These lengths are called~$L_k$
here; see Proposition~\ref{basicL}.

The words~$x_k$ are primarily considered here in
order to study the monotonicity of the functions~$F_k$. Nonetheless, several
results on~$x_k$ are also of interest on their own. For example,
for $k\ge 1$ and $n\ge 0$, we prove that there are more occurrences of the letter~$k$
among the first $n$ letters of~$x_k$ than occurrences of the letter $k+1$ among
the first $n$ letters of~$x_{k+1}$ (Theorem~\ref{p:1}). Similarly,
the letter~$1$ occurs more often among the first~$n$ letters of~$x_k$ than among the first~$n$ letters of~$x_{k+1}$ (Proposition~\ref{p:C1decr}). We denote the number of letters~$i$ among the prefix of~$x_k$ of length~$n$ by~$C_k^{(=i)}(n)$;
see Section~\ref{s:count}.

\subsection{The Rocq formalization}

All the results presented in this article have been formally certified
using the Rocq proof assistant~\cite{Rocq} (previously known as
Coq). This implementation in Rocq is freely available~\cite{LetouzeyCoqDev}.
Thanks to this Rocq formalization, we ensure precise
definitions and statements while preventing any error during
the corresponding proofs. It can hence serve as a reference for the
interested reader.
In particular, the recommended entry point to read alongside this article~is
\begin{center}
\url{https://github.com/letouzey/hofstadter_g/blob/main/Article1.v}
\end{center}
The current article tries to be faithful to this formal work
while staying readable by a large audience, at the cost of possible
remaining mistakes during the transcription.

As a quick illustration, here is one possible Rocq encoding of the $F_k$
function:\footnote{See \texttt{Definition~f} in the file \texttt{GenG.v}~\cite{LetouzeyCoqDev}.
Faster implementations of~$F_k$ are also provided in the file \texttt{Fast.v} and proved equal with the version shown here.}
\begin{verbatim}
Fixpoint recf k p n : nat := match p with
| 0 => 0
| S p' => n - Nat.iter k (recf k p') (n-1)
end.
Definition f k n := recf k n n.
\end{verbatim}
Note that \texttt{Nat.iter~k} iterates a given function \texttt{k} times.
This encoding departs slightly from the previous definition of~$F_k$:
in order to satisfy the Rocq constraints on recursive definitions,
an extra parameter~\texttt{p} is initialized to~\texttt{n} and
decreased in recursive calls. Then we proved that this Rocq definition
fulfills the desired equation:\footnote{See \texttt{Lemma~f\_eqn} and its proof in the file \texttt{GenG.v}~\cite{LetouzeyCoqDev}.
Note that the Rocq subtraction on natural numbers of type~\texttt{nat} is bounded by~\texttt{0}, for example \texttt{0-1~=~0}. Therefore, the statement of~\texttt{f\_eqn} holds both when $\mathtt{n}=0$ and when $\mathtt{n}\neq 0$.}
\begin{verbatim}
Lemma f_eqn : forall k n : nat, f k n = n - Nat.iter k (f k) (n-1).
\end{verbatim}
In this setting, our main Theorem~\ref{t:2} has a very simple Rocq statement:\footnote{
See \texttt{Theorem~f\_grows} and its proof in the file \texttt{WordGrowth.v}~\cite{LetouzeyCoqDev}.}
\begin{verbatim}
Theorem f_grows : forall k n, f k n <= f (k+1) n.
\end{verbatim}

The provided Rocq files can be machine-checked again by any recent
installation of Rocq; see the joint \texttt{README} file in the
repository~\cite{LetouzeyCoqDev}.
Since Sections~\ref{s:poly} and~\ref{s:infini} of this article involve real numbers, their corresponding Rocq files rely on
the Rocq library \texttt{Reals} and its four logical axioms,
in particular the axiom of excluded middle. All the rest of the
formalization, including our key Theorem~\ref{t:2}, has been performed within
Rocq core logic, with no axiom. This fact can be checked
via the Rocq command \texttt{Print Assumptions} on our theorems.

\subsection{Summary}
After introducing some notation and basic properties in Section~\ref{s:notation},
we establish a key link between the functions~$F_k$ and words~$x_k$ in Section~\ref{s:relating_F_L}.
In Section~\ref{s:count}, we count some letters in prefixes of~$x_k$.
Section~\ref{s:poly} presents some zeros of polynomials that are
used in Section~\ref{s:infini} when stating the infinitary behavior
of our objects, i.e., when $n \to \infty$.
In particular, we present linear equivalents of~$F_k$ and a weak
form of monotonicity for the family~$(F_k)$.
Section~\ref{s:kmono} contains the promised full monotonicity proof
for all $n\ge 0$,
at first for words, then functions, and various extensions are made or
conjectured.
Section~\ref{s:morecount} revisits the previous results, this time
in terms of the number of occurrences of letters.

\subsection{Related works}
Various aspects of the functions~$F_k$ can be studied.
We mention some of them, even if they do not seem to be connected to the monotonicity, which is the main focus of this article. 
 
First, the case $k=2$ has been well explored
\cite{DowneyGriswold84,GaultClint}.
However, the fact that $F_2$ has an exact expression
$\lfloor (n+1)/\varphi \rfloor$ is quite particular and too specific
to draw general conclusions about all the other~$F_k$.

A~nice general fact about the functions~$F_k$ is that they can be
presented as right shifts of digits when numbers~$n$ are written
with the appropriate numerical representations. In particular, for
$k=2$, the corresponding numerical representation is the Zeckendorf
decomposition, writing numbers as sums of distinct Fibonacci numbers.
This can be extended to other values of~$k$ by changing the sequence
used as the base for the decomposition, from Fibonacci numbers to a similar linear recurrence \cite{MeekVanRees84,DowneyGriswold84,Kimberling95,EriksenAnderson12,Dekking2023,Shallit2024}.
In particular, Meek and Van Rees~\cite{MeekVanRees84} focus on a
slightly different family of functions, which can be written here as
$F_k(n+1) - 1$.
If we denote these ``shifted'' functions by~$\tilde{F}_k$, they verify the
equations $\tilde{F}_k(0)=0$ and $\tilde{F}_k(n) = n-1 - \tilde{F}_k^k(n-1)$; note the extra~$-1$ when
compared to the equation of~$F_k$.
In any case, having a description of~$F_k(n)$ (or~$\tilde{F}_k(n)$) in terms
of a $k$-decomposition of~$n$ does not directly help for our
monotonicity problem, since the $k$- and
$(k+1)$-decompositions of $n$ differ too much, just as a number tends to have very different base-2 and base-3 digits.
The concept of $k$-decompositions is also related to
the notions of substitutions and words that we exploit here directly.

As a final remark about~$F_k$, note that $F_k$ can also be presented
as a companion infinite tree where the same pattern is continuously
repeated: a binary node is followed on its right by $k-1$ unary
nodes. When the nodes of this tree are labelled by increasing numbers,
in a breadth-first manner, left to right, then the node number~$n$ has
$F_k(n)$ as parent node. This was already described by
Hofstadter~\cite{GEB}; also see the general introduction to the problem by the first author~\cite{Letouzey15}. Here again,
these alternative presentations of~$F_k$ do not seem to help
concerning the monotonicity, with no direct relations between
trees for~$k$ and for~$k+1$.

The sequences~$x_k$ defined in Section~\ref{subs:substitution} can be considered as generalizations of the Fibonacci word. However, one can find many different ``generalized Fibonacci'' words in the literature, for example the $k$-bonacci words, which can be defined as fixed points of the morphism satisfying $1 \mapsto k$, $i \mapsto k(i-1)$, for $2 \le i \le k$ \cite{Tanwen07}, the fixed points of the morphisms $0 \mapsto 0^m1^n$, $1 \mapsto 0$, for positive integers $m,n$ \cite[Ex.~10.11.18]{allouche_shallit_2003}, and other examples \cite{RAMIREZ2014,GOLDBERG2018,HARE2022}.

\section{Notation and basic properties}
\label{s:notation}

We let $\partial F_k^j(n)$ denote the difference
$F_k^j(n+1) - F_k^j(n)$.

\begin{proposition}
\label{basicF}
For all $j,k \ge 1$, the function $F_k^j$ satisfies the following
basic properties:
\begin{enumerate}[label=(\alph*)]
\itemsep.5ex
\item $F_k^j(1) = F_k^j(2) = 1$,
\item $F_k^j(n) \ge 1$ whenever $n\ge 1$,
\item $F_k^j(n) < n$ whenever $n\ge 2$,
\item we have $\partial F_k(n) = 1 - \partial F_k^k(n-1)$ for all $n \ge 1$, 
\item hence $\partial F_k^j(n) \in \{0,1\}$ for all $n\ge 0$,
\item the function $F_k^j$ is monotonically increasing and onto (but not one-to-one).
\end{enumerate}
\end{proposition}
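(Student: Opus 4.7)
The plan is to prove the whole proposition via strong induction on $n$, with the iterates handled by an auxiliary induction on $j$; items (a)--(e) feed each other, and (f) is extracted from (e) plus an unboundedness argument.

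First I would dispense with (a) as a direct calculation from the recurrence: $F_k(1)=1-F_k^k(0)=1$ and $F_k(2)=2-F_k^k(1)=2-1=1$, and then $F_k^j(1)=F_k^j(2)=1$ follows because $F_k$ fixes~$1$. For (b) and (c), strong induction on~$n$ suffices: assuming $1\le F_k(m)<m$ for $2\le m<n$ (so in particular $F_k(m)\in\{0,\dots,m\}$ for all $m<n$, which is also what makes the recursive definition of $F_k(n)$ well-founded), one iterates to get $1\le F_k^k(n-1)\le n-1$, and substituting into $F_k(n)=n-F_k^k(n-1)$ yields $1\le F_k(n)\le n-1$. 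The properties then propagate to $F_k^j$ by iteration, using (b) to stay in the domain where the IH applies.

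Item (d) is immediate from subtracting the defining equations at $n+1$ and~$n$. The substance lies in (e), which I would prove by strong induction on~$n$, with an inner induction on~$j$. For $j=1$, the identity (d) combined with the strong IH applied at $n-1$ (with $j=k$) gives $\partial F_k(n)=1-\partial F_k^k(n-1)\in\{0,1\}$. For the step from~$j$ to $j+1$, I would write
\[
\partial F_k^{j+1}(n)=F_k^j(F_k(n+1))-F_k^j(F_k(n)),
\]
and split on $\partial F_k(n)\in\{0,1\}$ (already known at~$n$): if it is~$0$ the difference is~$0$; if it is~$1$, then $\partial F_k^{j+1}(n)=\partial F_k^j(F_k(n))$, and since $F_k(n)<n$ for $n\ge 2$ by (c), the strong IH on~$n$ delivers the desired $\{0,1\}$-valuation. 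The small base cases ($n\le 2$) are handled by (a).

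For (f), monotonicity of each $F_k^j$ is just the nonnegativity half of (e). Surjectivity of~$F_k$ follows from $F_k(0)=0$, $\partial F_k\in\{0,1\}$, and unboundedness of~$F_k$: if $F_k\le M$, then $F_k^k\le M$ and hence $F_k(n+1)=n+1-F_k^k(n)\ge n+1-M\to\infty$, a contradiction. Surjectivity of $F_k^j$ follows by composing surjections, and the final non-injectivity comes from $F_k(1)=F_k(2)$ in~(a). The main obstacle I anticipate is bookkeeping the double induction in (e): the dependency $\partial F_k(n)\leftrightarrow\partial F_k^k(n-1)$ crosses both the $n$-axis and the $j$-axis, so one must carefully phrase the induction hypothesis as ``for all $m<n$ and all $j$'' to keep the argument well-founded.
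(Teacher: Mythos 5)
Your proposal is correct and matches the paper's approach: the paper dispatches this proposition with ``direct use of the definition or easy induction on $n$,'' and your argument is exactly that, with the details of the double induction on $n$ and $j$ (and the mutual dependence of (d) and (e)) spelled out. The only cosmetic point is that the base cases genuinely needed in (e) are $n\in\{0,1\}$ (for $n=2$ the general step already applies since $F_k(2)=1<2$), but this does not affect correctness.
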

\begin{proof}
Direct use of the definition or easy induction on~$n$.
\end{proof}

For $n\ge 0$, let $x_k[0{:}n) = x_k[0]x_k[1]\cdots x_k[n-1]$ denote
the prefix of~$x_k$ of length~$n$. In particular, $x_k[0{:}0)$ is
the empty word, and $\big|x_k[0{:}n)\big| = n$, where $|{\cdot}|$ denotes the length of~a~word.

Note that $x_k$ can also be seen as the fixed point of the substitution~$\tau_k^{k-1}$.
Since $\tau_k^{k-1}(i) = k12\cdots(i-1)$ for all $1 \le i \le k$,
this provides a decomposition of~$x_k$ into blocks delimited by the
letter~$k$ (in red in Figure~\ref{f:xk}). The lengths of these blocks
are the successive letters of~$x_k$, and this gives us a convenient
way to compute~$x_k$. For instance,
$\tau_3^2(3)=\red{3}12$, then
$\tau_3^{2}(312)=\red{3}12\,\red{3}\,\red{3}1$, etc.

Similarly, $x_k$ is also the fixed point of~$\tau_k^k$.
Since $\tau_k^k(i) = k12\cdots i$ for all $1 \le i \le k$, this
other decomposition is interesting for counting occurrences of the letter~$1$
in~$x_k$, because $1$ occurs only in the second position of each such
block.

As an alternative way of computing~$x_k$, note that its prefixes of
the form~$\tau_k^j(k)$ follow the following base cases and recursive rule.

\begin{proposition}
\label{p:taujk_rec}
Let $k\ge 1$. Then
\[
\tau_k^j(k) = \begin{cases} k12\cdots j, & \text{if}\ 0\le j\le k;\\
\tau_k^{j-1}(k)\,\tau_k^{j-k}(k), & \text{if}\ j\ge k. \end{cases}
\]
\end{proposition}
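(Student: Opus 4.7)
The plan is to prove each case by a direct induction on $j$, relying on the elementary action of $\tau_k$ on the letters $1,\ldots,k$.

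For the first case $0 \le j \le k$, I would induct on $j$. The base case $j=0$ gives $\tau_k^0(k)=k$, which matches the convention that $k12\cdots 0$ denotes the single-letter word $k$. For the inductive step, suppose $\tau_k^j(k) = k\,1\,2\,\cdots\,j$ for some $j < k$. Applying $\tau_k$ and using the morphism property letter by letter, one gets
\[
\tau_k^{j+1}(k) \;=\; \tau_k(k)\,\tau_k(1)\,\tau_k(2)\cdots\tau_k(j) \;=\; k1\cdot 2\cdot 3\cdots (j{+}1).
\]
The crucial point, ensuring the induction goes through up to $j=k$, is that the intermediate letters $1,2,\ldots,j$ are all strictly less than $k$ (since $j<k$), so that $\tau_k(i) = i{+}1$ applies at each of them; only the leading $k$ is mapped to~$k1$.

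For the second case $j \ge k$, I would again induct on $j$. The base case $j=k$ is immediate from the first case: $\tau_k^{k-1}(k)\,\tau_k^{0}(k) = k12\cdots(k{-}1)\cdot k = k12\cdots k = \tau_k^k(k)$. For the inductive step, assume $\tau_k^j(k) = \tau_k^{j-1}(k)\,\tau_k^{j-k}(k)$. Applying $\tau_k$ to both sides and using the morphism property $\tau_k(uv)=\tau_k(u)\tau_k(v)$, one obtains
\[
\tau_k^{j+1}(k) \;=\; \tau_k^{j}(k)\,\tau_k^{j-k+1}(k),
\]
which is exactly the claim for $j{+}1$.

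There is essentially no hard step here; the only subtlety is to be careful at the boundary $j=k$, where both formulas apply and must agree, and to make sure the induction in the first case does not overshoot $j=k$ (since $\tau_k(k)=k1$, not $k{+}1$). Both facts are handled above.
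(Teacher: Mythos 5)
Your proof is correct and follows essentially the same route as the paper: induction on $j$, applying $\tau_k$ to the word given by the induction hypothesis, with the boundary case $j=k$ checked by noting $k12\cdots(k{-}1)\,k=\tau_k^{k-1}(k)\,\tau_k^0(k)$. The only cosmetic difference is that you split the argument into two separate inductions while the paper runs a single induction with a case split, which changes nothing of substance.
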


\begin{proof}
We use induction on~$j$. For $j=0$, we have $\tau_k^0(k)=k$.
Otherwise, $\tau_k^j(k) = \tau_k(\tau_k^{j-1}(k))$, and we use the induction
hypothesis on $j-1$. If $j\le k$, then
\[
\tau_k^j(k) = \tau_k(\tau_k^{j-1}(k)) = \tau_k(k12\cdots (j-1)) =
k12\cdots j,
\]
which also gives the recursive rule for $j=k$ since
$k12\cdots(k-1)k = \tau_k^{k-1}(k)\tau_k^0(k)$.
If $j>k$, then $j-1\ge k$, and
\[
\tau_k^j(k) = \tau_k(\tau_k^{j-1}(k)) =
\tau_k(\tau_k^{j-1-1}(k)\tau_k^{j-1-k}(k)) =
\tau_k^{j-1}(k)\tau_k^{j-k}(k). \qedhere
\]
\end{proof}

For $k\ge 1$, an important counterpart to~$F_k$ is the function
\[
L_k:\, \mathbb{N} \to \mathbb{N}, \quad n \mapsto \big|\tau_k(x_k[0{:}n))\big|.
\]

\begin{proposition}
\label{basicL}
The lengths $L_k(n) = \big|\tau_k(x_k[0{:}n))\big|$
satisfy the following basic properties for all $k\ge 1$ and $j,n \ge 0$:
\begin{enumerate}[label=(\alph*)]
\itemsep.5ex
\item the $j$-th iterate of $L_k$ satisfies
$L_k^j(n) = \big|\tau_k^j(x_k[0{:}n))\big|$,
\item $L_k^j(0)=0$ and $L_k^j(n+1)=L_k^j(n)+\big|\tau_k^j(x_k[n])\big|$,
\item $L_k^j(1)=j+1$ when $j\le k$ and
  $L_k^j(1)=L_k^{j-1}(1)+L_k^{j-k}(1)$ when $j\ge k$,
\item the function $L_k^j$ is strictly monotonically increasing over $n$,
\item $L_k^j(n) \ge n$, with equality only when $j=0$ or $n=0$,
\item for $n \ge 1$, the function $j\mapsto L_k^j(n)$ is strictly monotonically increasing over $j$.
\end{enumerate}
\end{proposition}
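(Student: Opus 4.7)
The whole proposition is a cascade: each item feeds the next, and the only real content sits in part (a) and in the strict inequality of (e). Everything else is essentially bookkeeping.

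My plan is to start with (a) by induction on $j$. The base case $j=0$ holds since $\tau_k^0$ is the identity. For the inductive step, the key observation is that $x_k$ is a fixed point of $\tau_k$, so $\tau_k^{j-1}(x_k[0{:}n))$ is a prefix of $\tau_k^{j-1}(x_k)=x_k$, of length $L_k^{j-1}(n)$ by the induction hypothesis. Hence it equals $x_k[0{:}L_k^{j-1}(n))$, and applying $\tau_k$ once more gives a word of length $L_k(L_k^{j-1}(n)) = L_k^j(n)$, as desired. Part (b) is then immediate from (a) and the morphism property $\tau_k^j(uv) = \tau_k^j(u)\tau_k^j(v)$ applied to $u = x_k[0{:}n)$ and $v = x_k[n]$.

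For (c), I use (b) with $n=0$: $L_k^j(1) = L_k^j(0) + |\tau_k^j(x_k[0])| = |\tau_k^j(k)|$, because $x_k[0]=k$. Proposition~\ref{p:taujk_rec} then gives exactly the two cases, of lengths $j{+}1$ (for $0\le j\le k$) and $|\tau_k^{j-1}(k)|+|\tau_k^{j-k}(k)| = L_k^{j-1}(1)+L_k^{j-k}(1)$ (for $j\ge k$). Part (d) follows from (b) since for any letter $a$, $\tau_k(a)$ is nonempty, hence by an easy induction so is $\tau_k^j(a)$; therefore $L_k^j(n{+}1) - L_k^j(n) \ge 1$.

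Part (e) is the one piece that needs an actual argument: for $j,n\ge 1$ we must upgrade the weak bound to strictness. The cleanest route is induction on $j$. For $j=1$, writing $|\tau_k(w)|$ as $|w|$ plus the number of occurrences of $k$ in $w$ (since $\tau_k(k) = k1$ and $\tau_k(i)=i{+}1$ otherwise), and noting that $x_k[0]=k$, any nonempty prefix of $x_k$ contains at least one $k$, so $L_k(n) \ge n+1 > n$. For the inductive step, $L_k^{j+1}(n) = L_k(L_k^j(n))$ and $L_k^j(n)\ge 1$ by (d), so $L_k(L_k^j(n)) > L_k^j(n) \ge n$. Finally, (f) is an immediate consequence of (e): $L_k^{j+1}(n) = L_k(L_k^j(n)) > L_k^j(n)$ whenever $L_k^j(n)\ge 1$, which by (d) and (e) holds for all $n\ge 1$.

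The only step I expect to require any thought is (e), precisely because one must exploit the special fact that $x_k$ starts with the letter $k$ (the unique letter whose image under $\tau_k$ has length $>1$) to rule out equality; once that is noted, the rest is routine.
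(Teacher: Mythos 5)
Your proposal is correct and follows essentially the same route as the paper: part (a) via the same fixed-point/induction identity $|\tau_k^j(x_k[0{:}n))| = L_k(L_k^{j-1}(n))$, part (c) from Proposition~\ref{p:taujk_rec}, and the strictness in (d)--(f) from the key fact that $|\tau_k(w)|\ge|w|$, with strict inequality for nonempty prefixes of $x_k$ because $x_k[0]=k$ and $|\tau_k(k)|=2$. You merely spell out in more detail the steps the paper dispatches as ``direct consequences.''
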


\begin{proof}
First, $\tau_k(x_k[0{:}n))$ is the prefix of~$x_k$ of length~$L_k(n)$. It is
  hence equal to $x_k[0{:}L_k(n))$. 
We claim that, more generally, $\tau_k^j(x_k[0{:}n))$ is the prefix of~$x_k$ of length~$L_k^j(n)$ for all $j \ge 0$. 
Since $\tau_k^j(x_k[0{:}n))$ is a prefix of~$x_k$, we only have to prove that its length is equal to $L_k^j(n)$, and this follows inductively for $j \ge 1$ from
\[
\big|\tau_k^j(x_k[0{:}n))\big|=\big|\tau_k(\tau_k^{j-1}(x_k[0{:}n)))\big|
=\big|\tau_k(x_k[0{:}L_k^{j-1}(n)))\big| = L_k(L_k^{j-1}(n)) = L_k^j(n).
\]
Point (b) is a direct consequence of this more general claim.
Point (c) directly comes from Proposition~\ref{p:taujk_rec}.

Now a key fact: for all finite words $w$, we have $|w| \le |\tau_k(w)|$.
This inequality is even strict when $w$ is a non-empty prefix of $x_k$,
since $x_k[0]=k$ and $|\tau_k(k)|=2$. The remaining points are direct
consequences of this key fact.
\end{proof}

\begin{proposition} \label{p:bound_unique}
For $j\ge 0$ and $k,m\ge 1$, there exists a unique $n\ge 1$
such that $L_k^j(n-1)<m\le L_k^j(n)$.
\end{proposition}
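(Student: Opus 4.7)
The plan is to extract this purely from the two facts about $L_k^j$ already established in Proposition~\ref{basicL}: that $L_k^j$ is strictly increasing in~$n$ with $L_k^j(0)=0$, and that $L_k^j(n)\ge n$, so in particular $L_k^j(n)\to\infty$. Together these make $L_k^j$ an injective, unbounded map $\mathbb{N}\to\mathbb{N}$ starting at~$0$, and the claim is just the standard observation that such a map produces a unique ``slot'' for every positive integer~$m$.

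For existence, I would consider the set $S = \{n\ge 1 : L_k^j(n) \ge m\}$. Because $L_k^j(n)\ge n$, the integer $n=m$ lies in~$S$, so $S$ is nonempty and admits a least element~$n^*$. Then $L_k^j(n^*)\ge m$ by definition, while $L_k^j(n^*-1)<m$ either by minimality (if $n^*\ge 2$) or because $L_k^j(0)=0<m$ (if $n^*=1$). Hence $n^*$ witnesses the claim.

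For uniqueness, suppose $n_1<n_2$ both satisfied $L_k^j(n_i-1)<m\le L_k^j(n_i)$. Then $n_1\le n_2-1$, and strict monotonicity of $L_k^j$ in~$n$ gives $L_k^j(n_1)\le L_k^j(n_2-1)<m$, contradicting $L_k^j(n_1)\ge m$.

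The argument has no real obstacle: the content is entirely packaged in Proposition~\ref{basicL}(b,d,e), and the present proposition is essentially a bookkeeping lemma that will later let the authors define the inverse-like function sending $m$ to this unique~$n$ (presumably the Galois connection alluded to in the introduction). The only thing to be a bit careful about is the edge case $n^*=1$, which is why the statement requires $n\ge 1$ and why one needs $L_k^j(0)=0$ explicitly to anchor the lower bound.
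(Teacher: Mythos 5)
Your proof is correct and matches the paper's approach: the paper simply records the proposition as a ``standard consequence of $L_k^j(0)=0$ and the strict monotonicity of $L_k^j$,'' and your argument (least element of $\{n\ge 1: L_k^j(n)\ge m\}$ for existence, strict monotonicity for uniqueness) is exactly the standard argument being invoked, with the details filled in.
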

\begin{proof}
Standard consequence of $L_k^j(0)=0$ and the strict monotonicity of $L_k^j$.
\end{proof}

\section{Relating functions and word lengths}\label{s:relating_F_L}

We now establish that~$L_k^j$ allows expressing preimages of the
function~$F_k^j$.

\begin{theorem} \label{t:1}
Let $j\ge 0$ and $k\ge 1$.
\begin{enumerate}[label=(\alph*)]
\item
For all $n \ge 1$, we have
\[
F_k^{-j}(\{n\}) = \big(L_k^j(n-1),L_k^j(n)\big] \cap \mathbb{N}.
\]
\item
Equivalently, for all $m \ge 1$, we have
\begin{equation} \label{e:Fkj}
L_k^j\big(F_k^j(m) - 1\big) < m \le L_k^j\big(F_k^j(m)\big).
\end{equation}
\end{enumerate}
\end{theorem}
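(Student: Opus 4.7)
My plan is to reduce the general $j$ case to $j=1$ via an outer induction on~$j$, then prove the $j=1$ case by showing $L_k$ is a right inverse of $F_k$ with exactly the predicted preimage interval structure.

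For the outer induction, the base $j=0$ is immediate since $F_k^0 = L_k^0 = \mathrm{id}$, giving the trivial $m{-}1 < m \le m$. For the step, assuming the claim at level $j$ and at level $j=1$, fix $m \ge 1$ and set $p = F_k^j(m)$ and $n = F_k^{j+1}(m) = F_k(p)$. The $j=1$ case applied to $p$ gives $L_k(n{-}1) < p \le L_k(n)$, and the inductive hypothesis applied to $m$ gives $L_k^j(p{-}1) < m \le L_k^j(p)$. Strict monotonicity of $L_k^j$ (Proposition~\ref{basicL}(d)), together with the commutativity $L_k^j \circ L_k = L_k^{j+1}$, combines these bounds into $L_k^{j+1}(n{-}1) < m \le L_k^{j+1}(n)$.

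The heart of the proof is the $j=1$ case, which I would handle by induction on~$n$, establishing jointly that $F_k(L_k(n)) = n$ and that $F_k(L_k(n){+}1) > n$. Combined with Proposition~\ref{basicF}(e)--(f) (monotonicity, increments in $\{0,1\}$, surjectivity), these force the preimages $F_k^{-1}(\{n\})$ to be exactly the intervals $(L_k(n{-}1), L_k(n)] \cap \mathbb{N}$. To prove $F_k(L_k(n)) = n$, I would unfold the defining equation $F_k(L_k(n)) = L_k(n) - F_k^k(L_k(n){-}1)$ and split cases on whether $x_k[n{-}1] = k$: the value of $L_k(n) - L_k(n{-}1)$ is then known ($2$ or $1$ respectively), and the block structure of $\tau_k^{k}$ from Proposition~\ref{p:taujk_rec} controls the position $L_k(n){-}1$ within the $\tau_k$-factorization of~$x_k$.

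The main obstacle is the self-reference within the $j=1$ base: computing $F_k(L_k(n))$ requires evaluating $F_k^k$ on a smaller argument, so the theorem at level $j=k$ is already needed on inputs below $L_k(n)$. Consequently the cleanest actual proof is most likely a single well-founded induction on~$m$, with the full statement universally quantified over~$j$ inside the inductive hypothesis; the outer-on-$j$ layering above then serves as the skeleton of the argument, while the real work lies in ensuring that each unfolding of $F_k$ or its iterates strictly decreases the induction measure, via the careful bookkeeping between the iterated lengths $L_k^j$ and iterated function values $F_k^j$.
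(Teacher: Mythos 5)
Your skeleton ultimately converges to the paper's own structure: you correctly notice that the outer induction on~$j$ cannot stand on its own because the $j=1$ case already invokes $F_k^k$ on smaller arguments, and you then retreat to a single strong induction on~$m$ with the statement quantified over all~$j$ inside the hypothesis, with the $j\ge 2$ step handled by composing the level-$j$ and level-$1$ brackets through the monotone $L_k^j$ --- this is exactly what the paper does. So the proposal is not a different route; the question is whether it completes the same route, and it does not.

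The genuine gap is the $j=1$ inductive step, which is the heart of the proof and is only gestured at. To get \eqref{e:Fkj} at $m=n{+}1$, $j=1$, one must split on $\partial F_k^k(n{-}1)\in\{0,1\}$ and extract from the induction hypotheses at $(n,j{=}1)$, $(n,j{=}k)$ and $(n{-}1,j{=}k)$ not just inequalities but exact facts: in the case $F_k^k(n)=F_k^k(n{-}1)$ one needs the equality $L_k(F_k(n))=n$, and in the case $F_k^k(n)-F_k^k(n{-}1)=1$ one needs the strict inequality $n<L_k(F_k(n))$. These do not follow from monotonicity and bookkeeping alone; they require specific combinatorial input about $x_k$, namely that in a $\tau_k^k$-block $k12\cdots i$ the letter $1$ occurs only in the second position, and that no image $\tau_k(x_k[p])$ begins with the letter $1$ (so a length-$\ge 2$ image forces $x_k[p]=k$ and pins down where $1$ can sit). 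Your sketch (``split on whether $x_k[n{-}1]=k$; the block structure of Proposition~\ref{p:taujk_rec} controls the position'') names the ingredients but never derives these equalities and strict inequalities, and your closing claim that the real work lies in making the induction measure decrease mislocates the difficulty: the measure bookkeeping is routine, while the delicate part is precisely this case analysis. Two smaller omissions: the paper must treat $k=1$ separately (its word argument needs the letters $1$ and $k$ to differ, and uses $L_1^j(n)=2^jn$, $F_1^j(m)=\lceil m/2^j\rceil$ instead), and your target reformulation $F_k(L_k(n))=n$ together with $F_k(L_k(n){+}1)=n{+}1$ is an acceptable equivalent of the $j=1$ case but is asserted rather than proved.
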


\begin{proof}
We first prove the equivalence of (a) and~(b).
The case~(a) directly implies~(b); just replace $n$ by~$F_k^j(m)$. 
Now we prove the set equation in~(a) assuming~(b):
If $m \in F_k^{-j}(\{n\})$, then $F_k^j(m)=n$, and \eqref{e:Fkj} becomes
$L_k^j(n-1)<m\le L_k^j(n)$. Hence $m$ is a natural number in the
right interval. On the other hand, if
$m \in \big(L_k^j(n-1),L_k^j(n)\big] \cap \mathbb{N}$, then Proposition~\ref{p:bound_unique} and \eqref{e:Fkj} imply that $F_k^j(m)=n$,
so $m\in F_k^{-j}(\{n\})$. The two sets in~(a) are hence equal.

We now prove the statement~(b).
The case $k=1$ must be separately handled, since our general proof
below requires the letters $1$ and $k$ to differ.
Fortunately, when $k=1$ we have $L_1^j(n) = 2^{j}n$ and
$F_1^j(m)=\lceil m/2^j \rceil$, which easily resolves this case.
We can also separately handle
the case $j=0$, which is obvious since both $L_k^0$ and~$F_k^0$ are the
identity functions.

From now on, we assume $k \ge 2$ and prove the inequalities \eqref{e:Fkj} for all $j \ge 1$ by strong induction on
$m \ge 1$.
For $m \in \{1,2\}$ and $j \ge 1$, we have $F_k^j(m) = 1$ by Proposition~\ref{basicF}, 
while $L_k^j(0)=0$ and $L_k^j(1) > L_k^0(1) = 1$ by Proposition~\ref{basicL}. Hence the desired inequalities hold in this case.

Let $n \ge 2$, and assume that \eqref{e:Fkj} holds for all $1 \le m \le
n$, $j \ge 1$. We now prove the inequalities in \eqref{e:Fkj} for $m = n+1$, first in the case $j=1$,
and then for $j\ge 2$.

\smallskip\noindent
\textbf{Case} $j = 1$.
We use the following instances of the induction hypothesis:
\begin{align}
\label{IHn} L_k(F_k(n) - 1) &< n \le L_k(F_k(n)), \\
\label{IHnk} L_k^k(F_k^k(n) - 1) &< n \le L_k^k(F_k^k(n)), \\
\label{IHpred} L_k^k(F_k^k(n-1) - 1) &< n-1 \le L_k^k(F_k^k(n-1)).
\end{align}
Also recall from Proposition~\ref{basicF} that
\[
F_k(n+1) - F_k(n) = 1 - \big(F_k^k(n) - F_k^k(n-1)\big)
\]
and $F_k^k(n) - F_k^k(n-1)$ is either $0$ or $1$.
We separately handle these two possible differences.

\smallskip\noindent
\textbf{Case} $j = 1$, $F_k^k(n) = F_k^k(n-1)$.
If $F_k^k(n) = F_k^k(n-1)$, then $F_k(n+1) = F_k(n)+1$.
From~\eqref{IHn} and the strict monotonicity of $L_k$
(Proposition~\ref{basicL}), we get
\[
n \le L_k(F_k(n)) < L_k(F_k(n)+1) = L_k(F_k(n+1)).
\]
Hence the right inequality in \eqref{e:Fkj} holds for $m = n+1$, $j=1$.
We claim that $L_k(F_k(n)) = n$.
If this were not the case, then \eqref{IHn} would give
$L_k(F_k(n) - 1) \le n-1 < n < L_k(F_k(n))$.
For the words corresponding to these lengths, this would imply that
$x_k[n-1]x_k[n]$ is a subword of $\tau_k(x_k[F_k(n) - 1])$, and in particular that $|\tau_k(x_k[F_k(n) - 1])| \ge 2$. This could only happen when $x_k[F_k(n) - 1] = k$, and thus $x_k[n]=1$.
From $F_k^k(n) = F_k^k(n-1)$, \eqref{IHnk} and \eqref{IHpred}, we get $L_k^k(F_k^k(n) - 1) \le n-2 < n-1 < L_k^k(F_k^k(n))$, and hence $x_k[n-2]x_k[n-1]$ is a subword of the block $\tau_k^k(x_k[F_k^k(n) - 1])$.
Recall that the letter~$1$ occurs in this kind of block only at
the second position. But here, either $x_k[n]$ is still in
the block, but at least in third position, or it starts the next
block, and hence $x_k[n]=k$. Both cases are incompatible with $x_k[n] = 1$.
Therefore, we have $L_k(F_k(n)) = n$. Hence
\[
L_k(F_k(n+1) - 1) = L_k(F_k(n)) = n < n+1,
\]
and also the left inequality in \eqref{e:Fkj} holds here for $m = n+1$, $j=1$.

\smallskip\noindent
\textbf{Case} $j = 1$, $F_k^k(n) = F_k^k(n-1)+1$.
If $F_k^k(n) - F_k^k(n-1) = 1$, then $F_k(n+1) = F_k(n)$.
In this case, the left inequality in \eqref{e:Fkj} is clear from \eqref{IHn}:
\[
L_k(F_k(n+1) - 1) = L_k(F_k(n) - 1) < n < n+1.
\]
From \eqref{IHnk} and \eqref{IHpred}, we deduce that
$L_k^k(F_k^k(n-1)) = n-1$. Hence, $x_k[n-1]x_k[n]$ starts a new
$\tau_k^k$ block, and thus $x_k[n-1]x_k[n] = k1$. This means that the
left inequality of \eqref{IHn} is actually strict here, because the
letter 1 cannot start the word $\tau_k(x_k[F_k(n)])$. So finally
\[
n < L_k(F_k(n)) = L_k(F_k(n+1)),
\]
and we have finished the proof of \eqref{e:Fkj} for $m = n+1$,
$j = 1$, in all possible cases.

\smallskip\noindent
\textbf{Case $j \ge 2$.}
It remains to prove \eqref{e:Fkj} for $m = n+1$ and $j \ge 2$.
Since $1 \le F_k(n+1) \le n$ by Proposition~\ref{basicF}, the
induction hypothesis
holds for $m = F_k(n+1)$ and $j-1$. Hence
\[
L_k^{j-1}(F_k^j(n+1) - 1) \le F_k(n+1)-1 \le
F_k(n+1) \le L_k^{j-1}(F_k^j(n+1)).
\]
Applying~$L_k$, which preserves these inequalities by monotonicity (Proposition~\ref{basicL}), gives
\[
L_k^j(F_k^j(n+1) - 1) \le L_k(F_k(n+1) - 1) \le
L_k(F_k(n+1)) \le L_k^j(F_k^j(n+1)).
\]
Finally, we can use \eqref{e:Fkj}, which we proved for $m = n+1$ and
$j=1$ above, in the middle:
\[
L_k^j(F_k^j(n+1) - 1) \le L_k(F_k(n+1) - 1) < n+1 \le L_k(F_k(n+1)) \le L_k^j(F_k^j(n+1)).
\]
Therefore, \eqref{e:Fkj} indeed holds for $m = n+1$ and all $j \ge 1$. Thus, by induction, \eqref{e:Fkj} holds for all $m \ge 1$, which concludes the proof of the theorem.
\end{proof}

In particular, we have $F_k^j(L_k^j(n)) = n$ for all $k\ge 1$ and
$j,n \ge 0$.
Moreover, $L_k^j(n)$ is the largest element of $F_k^{-j}(\{n\})$ while $L_k^j(n-1)+1$ is the smallest one (for $n>0$), and these
extrema may coincide. In particular, this is always the case when
$j=0$ and quite frequently when $j=1$ and $k>1$; see
Proposition~\ref{p:one-antecedent} for a study of this ratio.

The relationship between $F_k^j$ and $L_k^j$ can also be formulated as a Galois connection~\cite{Erne1993}.

\begin{corollary}\label{galois}
For all $j\ge 0$ and $k\ge 1$, the functions $F_k^j$ and~$L_k^j$
form a \emph{Galois connection} between~$\mathbb{N}$ and itself
(with $F_k^j$ as \emph{left adjoint} and $L_k^j$ as \emph{right adjoint}).
Indeed, for all $m,n \ge 0$ we have $F_k^j(n) \le m$ if and only if $n \le
L_k^j(m)$. Moreover, this Galois connection is said to be a
\emph{Galois insertion} because $F_k^j\circ L_k^j = \mathrm{id}$.
\end{corollary}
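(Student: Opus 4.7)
The plan is to reduce the corollary entirely to Theorem~\ref{t:1} together with the monotonicity of $L_k^j$ (Proposition~\ref{basicL}(d)) and of $F_k^j$ (Proposition~\ref{basicF}(f)). No new combinatorial content is needed; it is essentially a repackaging of the theorem in order-theoretic language.

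First I would establish the insertion identity $F_k^j \circ L_k^j = \mathrm{id}$. For $m=0$, both functions fix $0$ directly from the definitions. For $m \ge 1$, the number $L_k^j(m)$ obviously lies in the half-open interval $\bigl(L_k^j(m-1),\, L_k^j(m)\bigr] \cap \mathbb{N}$, which by Theorem~\ref{t:1} is exactly $F_k^{-j}(\{m\})$; hence $F_k^j(L_k^j(m)) = m$.

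Next I would prove the equivalence $F_k^j(n) \le m \iff n \le L_k^j(m)$. The boundary cases are immediate: when $n=0$ both sides hold (since $F_k^j(0)=0$ and $L_k^j(m) \ge 0$), and when $m=0$ with $n \ge 1$ both sides fail (by Proposition~\ref{basicF}(b) we have $F_k^j(n) \ge 1$, and $L_k^j(0)=0$). For $n,m \ge 1$, the forward direction chains the right inequality of \eqref{e:Fkj}, namely $n \le L_k^j(F_k^j(n))$, with the image of $F_k^j(n) \le m$ under the monotone map $L_k^j$, giving $n \le L_k^j(F_k^j(n)) \le L_k^j(m)$. Conversely, from $n \le L_k^j(m)$, monotonicity of $F_k^j$ and the insertion identity just proved yield $F_k^j(n) \le F_k^j(L_k^j(m)) = m$.

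There is no real obstacle: Theorem~\ref{t:1} carries all the work, and the only mild care is distinguishing the trivial zero cases from the generic $n,m \ge 1$ case in which the two inequalities of \eqref{e:Fkj} are applied respectively for the insertion identity and for the forward direction of the equivalence.
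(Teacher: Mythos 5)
Your proposal is correct and follows essentially the same route as the paper: the insertion identity $F_k^j\circ L_k^j=\mathrm{id}$ is extracted from Theorem~\ref{t:1} (as the paper does in the remark just before the corollary), and the equivalence is then obtained by the same two chains, $n \le L_k^j(F_k^j(n)) \le L_k^j(m)$ via \eqref{e:Fkj} and monotonicity of $L_k^j$, and $F_k^j(n) \le F_k^j(L_k^j(m)) = m$ via monotonicity of $F_k^j$ and the insertion identity. Your explicit treatment of the $m=0$, $n\ge 1$ boundary case is a minor extra precaution, not a different argument.
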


\begin{proof}
The case $n=0$ is obvious. For $n \ge 1$,
if $F_k^j(n) \le m$, then $n \le L_k^j(F_k^j(n)) \le L_k^j(m)$ by
Theorem~\ref{t:1} and the monotonicity of~$L_k^j$. Conversely, if
$n \le L_k^j(m)$, then $F_k^j(n) \le F_k^j(L_k^j(m))$ by monotonicity of~$L_k^j$ and $F_k^j(L_k^j(m)) = m$ as seen above.
\end{proof}

Note that Proposition~\ref{p:LCF} below gives a nice
expression of~$L_k(n)$ as $n+F_k^{k-1}(n)$. Hence
\[
L_k(F_k(n)) = F_k(n) + F_k^k(n) = n+1-\partial F_k(n) \in \{n,n+1\}.
\]

\section{Counting letters}
\label{s:count}

We now express the number of occurrences for the letters $1,\ldots,k$ in
prefixes of~$x_k$. Thanks to Theorem~\ref{t:1}, this relates
them to functions~$F_k$ in various ways.

Let $C_k^{(P)}(n)$ denote the number of letters satisfying the
predicate~$P$ in the prefix $x_k[0{:}n)$. More formally,
\[
C_k^{(P)}(n) = \#\{0 \le j < n \,:\, P(x_k[j])\}.
\]
In particular, we use the expressions
\begin{itemize}
\item $C_k^{(=i)}(n)$ for counting the occurrences of a specific letter $i$,
\item $C_k^{(>i)}(n)$ = $C_k^{(=i+1)}(n)+\cdots+C_k^{(=k)}(n)$
for counting all letters strictly greater than $i$.
\end{itemize}

\begin{proposition}\label{p:counting}
For all $k \ge 1$ and $n\ge 0$, we have
\begin{align}
F_k^{k-1}(n) &= C_k^{(=k)}(n), \label{e:FCk} \\
F_k^j(n) & = C_k^{(>j)}(n) & & \hspace{-8em} \text{for all}\ 0 \le j < k,
\label{e:FC>j} \\
F_k^{k+i-1}(n) & = C_k^{(=i)}(n+i) & & \hspace{-8em} \text{for all}\ 1 \le i < k.
\label{e:FCi}
\end{align}
\end{proposition}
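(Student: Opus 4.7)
The plan is to derive all three identities from Theorem~\ref{t:1} by examining the block structure of $\tau_k^j$ on letters $a \in \{1,\ldots,k\}$ and locating, in each block, the letter being counted.

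For \eqref{e:FC>j}, I would first establish a block lemma: when $0 \le j < k$, every $\tau_k^j(a)$ begins with a letter strictly greater than $j$, and (if of length at least $2$) continues only with letters $\le j$. Indeed, either $a + j \le k$, in which case $\tau_k^j(a) = a+j$ is a single letter $> j$, or $a + j > k$, in which case $\tau_k^j(a) = k\,1\,2\cdots(a{+}j{-}k)$ starts with $k > j$ and continues in $\{1,\ldots,a{+}j{-}k\} \subseteq \{1,\ldots,j\}$. Since $x_k[0{:}L_k^j(m)) = \tau_k^j(x_k[0{:}m))$ decomposes as $m$ consecutive such blocks, it contains exactly $m$ letters $> j$, one at the start of each block. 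For $n \ge 1$, setting $m = F_k^j(n)$ and invoking Theorem~\ref{t:1} ($L_k^j(m{-}1) < n \le L_k^j(m)$) places $n$ inside the $m$-th block past its initial $>j$-letter, giving $C_k^{(>j)}(n) = m$; the case $n=0$ is trivial. Identity \eqref{e:FCk} is then the specialization $j = k{-}1$, using $C_k^{(>k-1)} = C_k^{(=k)}$ since letters of $x_k$ lie in $\{1,\ldots,k\}$.

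For \eqref{e:FCi}, I would fix $1 \le i < k$ and use $\tau_k^k(a) = k12\cdots a$ to factor
\[
\tau_k^{k+i-1}(a) = \tau_k^{i-1}(k)\,\tau_k^{i-1}(1)\,\tau_k^{i-1}(2)\cdots\tau_k^{i-1}(a).
\]
A case analysis then shows that each such block contains the letter $i$ exactly once: $\tau_k^{i-1}(k) = k12\cdots(i{-}1)$ avoids $i$ and has length $i$, $\tau_k^{i-1}(1) = i$ is the sole occurrence, and for each $b \in \{2,\ldots,a\}$ either $\tau_k^{i-1}(b) = b{+}i{-}1 \in \{i{+}1,\ldots,k\}$ (when $b{+}i{-}1 \le k$) or $\tau_k^{i-1}(b) = k12\cdots(b{+}i{-}1{-}k)$ with $b{+}i{-}1{-}k \le i{-}1 < i$ (when $b{+}i{-}1 > k$, using $b \le k$); in both subcases no $i$ appears. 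Hence the unique $i$ of the block $\tau_k^{k+i-1}(a)$ sits at position $i$. Since $x_k$ begins with $\tau_k^{k-1}(k) = k12\cdots(k{-}1)$, the prefix $x_k[0{:}i)$ also contains no $i$; so for $n \ge 1$ and $m = F_k^{k+i-1}(n)$, the $m$-th $i$ in $x_k$ lies at position $L_k^{k+i-1}(m{-}1) + i$, and $C_k^{(=i)}(n{+}i) = m$ is equivalent to $L_k^{k+i-1}(m{-}1) < n \le L_k^{k+i-1}(m)$, which is exactly Theorem~\ref{t:1}'s characterization. The case $n=0$ is immediate.

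The only step requiring genuine work is the exhaustive case analysis for $\tau_k^{i-1}(b)$ underlying \eqref{e:FCi}; once the ``exactly one $i$ per block at position $i$'' lemma is in hand, the rest is bookkeeping driven by Theorem~\ref{t:1}.
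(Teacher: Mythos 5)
Your proof is correct and follows essentially the same route as the paper: decompose $x_k$ into the blocks $\tau_k^j(a)$ (resp.\ $\tau_k^{k+i-1}(a)$), show each block contains the counted letter exactly once at a fixed position, and conclude via Theorem~\ref{t:1}. The only cosmetic differences are that you obtain \eqref{e:FCk} as the $j=k{-}1$ case of \eqref{e:FC>j} (the paper proves it first and notes the specialization afterward) and that you analyze the blocks for \eqref{e:FCi} through the factorization $\tau_k^{k+i-1}(a)=\tau_k^{i-1}(k)\tau_k^{i-1}(1)\cdots\tau_k^{i-1}(a)$ instead of the paper's rewriting as $\tau_k^{a+i-1}(k)$ with common prefix $\tau_k^i(k)$.
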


\begin{proof}
We can use the same counting technique to prove each of these three equations.

For Equation~\eqref{e:FCk}, we already mentioned that
$x_k$ can be seen as a succession of ``blocks''
$\tau_k^{k-1}(i) = k1\cdots(i-1)$, each one containing~$k$ only as first
letter.
For a given $m\ge 0$, the letter $x_k[m]$ belongs to one of these blocks,
say the $p$-th one (counting from $p=1$ for the first block).
We hence have $p$
occurrences of the letter $k$ in $x_k[0]\cdots x_k[m]$, so
$C_k^{(=k)}(m+1)=p$.
Also note that the first $p$ blocks have a total length of
$\big|\tau_k^{k-1}(x_k[0{:}p))\big| = L_k^{k-1}(p)$. So $L_k^{k-1}(p)$
is also the first index in the $(p+1)$-st block, and hence
$L_k^{k-1}(p) > m$. Similarly $L_k^{k-1}(p-1)$ is
the first index of the $p$-th block, so all in all
\[
L_k^{k-1}(p-1) \le m < L_k^{k-1}(p).
\]
After substituting~$p$ and setting $n=m+1$, we obtain that for all $n>0$ we have
\[
L_k^{k-1}(C_k^{(=k)}(n) - 1) < n \le L_k^{k-1}(C_k^{(=k)}(n)).
\]
By Theorem~\ref{t:1} and Proposition~\ref{p:bound_unique}, this implies
$F_k^{k-1}(n) = C_k^{(=k)}(n)$
for all $k,n \ge 1$. Moreover, this
identity trivially holds for $n=0$ as well.

For Equation~\eqref{e:FC>j}, we generalize the previous
counting technique.
For $0\le j< k$ and a letter $1\le i\le k$, the word $\tau_k^j(i)$ starts with
exactly one letter strictly greater than~$j$, while the rest of this word consists 
of letters less than or equal to~$j$. Indeed, either $i+j\le k$ and
$\tau_k^j(i)=i+j > j$, or $i+j\ge k$ and
$\tau_k^j(i) = \tau_k^{i+j-k}(k) = k1\cdots (i+j-k)$ with $i+j-k\le j$.
Just as before, we deduce that
\[
L_k^{j}(C_k^{(>j)}(n) - 1) < n \le L_k^{j}(C_k^{(>j)}(n))
\]
for all $0\le j < k$ and $n>0$. As earlier, this allows establishing
$F_k^j(n) = C_k^{(>j)}(n)$
for all $0\le j< k$ and $n>0$. Once again,
this identity also holds for $n\ge 0$.

Finally, we use yet another instance of the same technique
for proving Equation~\eqref{e:FCi}.
Consider $1 \le i < k$. For all letters $1\le j\le k$ the words
$\tau_k^{k+i-1}(j)$
contain the letter~$i$ only at the $(i+1)$-st position. Indeed,
these words can also be written $\tau_k^{j+i-1}(k)$ (since
$\tau_k^{k-j}(j) = k$), so they all admit $\tau_k^i(k)=k1\cdots i$ as
a common prefix, possibly followed first by letters greater than $i$
and then by new blocks no larger than $k1\cdots(i-1)$. For similar
reasons as before, we hence have
\[
L_k^{k+i-1}(C_k^{(=i)}(n) - 1) < n-i \le L_k^{k+i-1}(C_k^{(=i)}(n))
\]
for all $n \ge i+1$.
So for all $1 \le i < k$ and $n\ge 0$, we have
$F_k^{k+i-1}(n) = C_k^{(=i)}(n+i)$.
\end{proof}

Actually, Equation~\eqref{e:FCk} can also be deduced from
Equation~\eqref{e:FC>j} in the particular case $j=k-1$, since
$C_k^{(>k-1)} = C_k^{(=k)}$.
Also note that Equation~\eqref{e:FCi} could be extended to the case
$i=0$ if we replace the letter $k$ by $0$ in the substitution~$\tau_k$ and
its fixed point~$x_k$.

We can now give an interesting alternative expression for~$L_k(n)$ (which is the largest element of $F_k^{-1}(\{n\})$ by Theorem~\ref{t:1}).

\begin{proposition}\label{p:LCF}
For all $k\ge 1$ and $n\ge 0$,
\[
L_k(n) = n + C_k^{(=k)}(n) = n + F_k^{k-1}(n).
\]
\end{proposition}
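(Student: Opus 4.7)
The plan is to compute $L_k(n) = \bigl|\tau_k(x_k[0{:}n))\bigr|$ directly by summing over letters in the prefix, then invoke Proposition~\ref{p:counting} for the second equality. Since this is a straightforward bookkeeping argument, I do not expect any significant obstacle.

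First I would observe the key fact about the image lengths under $\tau_k$: by the definition of the substitution, $\tau_k(k) = k1$ has length $2$, while $\tau_k(i) = i{+}1$ has length $1$ for every $1 \le i < k$. Hence for every letter $a \in \{1,2,\dots,k\}$, one has $|\tau_k(a)| = 1 + [a = k]$, where $[\cdot]$ denotes the Iverson bracket.

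Next, since $\tau_k$ is a morphism,
\[
L_k(n) = \bigl|\tau_k(x_k[0{:}n))\bigr| = \sum_{j=0}^{n-1} |\tau_k(x_k[j])| = \sum_{j=0}^{n-1} \bigl(1 + [x_k[j] = k]\bigr) = n + C_k^{(=k)}(n),
\]
which gives the first equality. The second equality, $C_k^{(=k)}(n) = F_k^{k-1}(n)$, is exactly Equation~\eqref{e:FCk} of Proposition~\ref{p:counting}, so applying it concludes the proof.
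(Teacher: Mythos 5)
Your proof is correct and follows essentially the same route as the paper: both compute $L_k(n)$ by noting that $|\tau_k(k)|=2$ while $|\tau_k(i)|=1$ for $i\neq k$, sum over the prefix to get $n + C_k^{(=k)}(n)$, and then invoke Equation~\eqref{e:FCk} for the second equality. No issues.
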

\begin{proof}
We have $|\tau_k(k)|=2$ while $|\tau_k(i)|=1$ for the other letters
$i\neq k$. Hence
\[
L_k(n) = 2\,C_k^{(=k)}(n) + C_k^{(\neq k)}(n) = n + C_k^{(=k)}.
\]
Finally, Equation~\eqref{e:FCk} leads to the last equality.
\end{proof}

By Equation~\eqref{e:FC>j}, we can also express the number of occurrences of a
specific letter $1\le j<k$ via the
difference between $C_k^{(>j-1)}$ and $C_k^{(>j)}$.
Hence for $1\le j<k$ and $n\ge 0$
\begin{equation}
\label{e:FCj}
F_k^{j-1}(n)-F_k^j(n) = C_k^{(>j-1)}(n)-C_k^{(>j)}(n) = C_k^{(=j)}(n).
\end{equation}
In particular, we obtain the following proposition.

\begin{proposition}
\label{FC1}
For $k>1$ and $n\ge 0$,
\[
F_k(n) = n-C_k^{(=1)}(n) = C_k^{(\neq 1)}(n).
\]
Hence $\partial F_k(n)$ is 0 if and only if $x_k[n]=1$, and $1$ otherwise.
\end{proposition}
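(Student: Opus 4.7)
The plan is to read off this proposition as essentially a restatement of Equation~\eqref{e:FCj}, specialized to $j=1$, together with a trivial partition argument.

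First, I would observe that since $k>1$, the index $j=1$ satisfies $1\le j<k$, so Equation~\eqref{e:FCj} applies and gives
\[
F_k^0(n) - F_k(n) = C_k^{(=1)}(n),
\]
which, using $F_k^0 = \mathrm{id}$, is exactly $F_k(n) = n - C_k^{(=1)}(n)$. The second equality $n - C_k^{(=1)}(n) = C_k^{(\neq 1)}(n)$ is just the fact that the prefix $x_k[0{:}n)$ has length $n$, so its letters split into those equal to $1$ and those different from $1$, i.e.\ $C_k^{(=1)}(n)+C_k^{(\neq 1)}(n)=n$.

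For the statement about $\partial F_k(n)$, I would simply take the difference:
\[
\partial F_k(n) = F_k(n{+}1) - F_k(n) = 1 - \bigl(C_k^{(=1)}(n{+}1) - C_k^{(=1)}(n)\bigr).
\]
By the definition of $C_k^{(=1)}$, the term $C_k^{(=1)}(n{+}1) - C_k^{(=1)}(n)$ is $1$ if $x_k[n] = 1$ and $0$ otherwise. Hence $\partial F_k(n) = 0$ precisely when $x_k[n]=1$, and $\partial F_k(n)=1$ otherwise, as claimed.

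There is no real obstacle: the content has already been packaged into Equation~\eqref{e:FCj}, and the only things to check are the admissibility of $j=1$ (which is why we assume $k>1$) and the elementary bookkeeping on counts. One could alternatively derive the two identities directly from Proposition~\ref{p:counting} with $j=0$ (since $C_k^{(>0)}$ counts all letters $\ne$ nothing, i.e.\ all $n$ letters, whereas $C_k^{(>1)}$ counts all non-$1$ letters), but routing through~\eqref{e:FCj} is the cleanest.
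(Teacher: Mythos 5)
Your proof is correct and follows the paper's own route exactly: specialize Equation~\eqref{e:FCj} to $j=1$ (admissible since $k>1$), note the trivial partition $C_k^{(=1)}(n)+C_k^{(\neq 1)}(n)=n$, and obtain the $\partial F_k$ statement by taking differences of the count. The only cosmetic remark is that your alternative via Proposition~\ref{p:counting} is more directly Equation~\eqref{e:FC>j} with $j=1$ (all letters exceed $1$ except the letter $1$ itself), rather than $j=0$.
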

\begin{proof}
Direct use of the previous equation in the particular case $j=1$.
Alternatively, one may use Equation~\eqref{e:FCi} for $i=1$, and then
the recursive definition of $F_k$.

Afterwards, we get
$F_k(n+1) - F_k(n) = 1 - (C_k^{(=1)}(n+1) - C_k^{(=1)}(n))$,
which is $0$ if and only if $x_k[n]=1$ and $1$ otherwise.
\end{proof}

This important link between $\partial F_k$ and $x_k$ helps us to transfer
many properties from one to the other.
In particular, $\partial F_k$ cannot have
two consecutive zeros, and it admits up to~$k$ consecutive ones but not~$k+1$.

Now let us describe the letters of~$x_k$ in terms of
differences~$\partial F_k^j(n)$.
Recall from Proposition~\ref{basicF} that
these differences are always either 0 or~1. Of course,
$\partial F_k^0(n) = 1$ and $\partial F_k^j(0) = 1$.
Now, for a non-zero~$n$, we have $F_k^j(n) = 1$ when $j$ is large enough,
in particular for $j\ge n-1$.
Indeed, on a non-zero argument, $F_k$ either returns~1 or removes at
least one from its argument, and this is iterated here $j$ times.
As a consequence, we always have $\partial F_k^n(n) = 0$ when $n>0$.
Moreover, if for some~$j$, we have $\partial F_k^j(n) = 0$, then
$\partial F_k^{j+1}(n) = 0$ as well. Conversely, if
$\partial F_k^j(n) = 1$, then $\partial F_k^{j-1}(n) = 1$.
Hence for all $k,n \ge 1$, the sequence $(\partial F_k^j(n))_{j\in\mathbb{N}}$
consists of a block of ones followed by an infinity of zeros.
Actually, the letters of~$x_k$ indicate how deep to dive in these
differences to find a first zero (or give up after $k-1$ attempts).

\begin{proposition}
\label{diffFx}
Let $1\le j < k$, $n\ge 0$. We have $x_k[n]=j$ if and only if
both $\partial F_k^{j-1}(n) = 1$ and $\partial F_k^j (n) = 0$.
Moreover, for $k\ge 1$, we have $x_k[n]=k$ if and only if $\partial F_k^{k-1}(n) = 1$.
(In the latter case, $\partial F_k^{k}(n)$ could be either 0 or~1.)

\end{proposition}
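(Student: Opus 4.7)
The plan is to reduce everything to the letter-count identity of Equation~\eqref{e:FC>j}, which already expresses each iterate $F_k^j$ (for $0 \le j < k$) as a prefix count in $x_k$. From there the differences $\partial F_k^j(n)$ translate cleanly into statements about the single letter $x_k[n]$.

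First I would observe that for any $0 \le j < k$, Equation~\eqref{e:FC>j} yields
\[
\partial F_k^j(n) = C_k^{(>j)}(n{+}1) - C_k^{(>j)}(n),
\]
and the right-hand side is exactly $1$ if $x_k[n] > j$ and $0$ otherwise, since the count $C_k^{(>j)}$ increments precisely on positions carrying a letter strictly above $j$. Hence for $0 \le j < k$ we have the crisp equivalence
\[
\partial F_k^j(n) = 1 \iff x_k[n] > j.
\]

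Next I would unpack the two cases of the proposition from this equivalence. For $1 \le j < k$, the condition ``$\partial F_k^{j-1}(n) = 1$ and $\partial F_k^j(n) = 0$'' becomes ``$x_k[n] > j{-}1$ and $x_k[n] \not> j$'', i.e. $x_k[n] = j$. For the boundary case $j = k$, since $x_k$ takes values in $\{1,\ldots,k\}$, the condition $x_k[n] > k{-}1$ is equivalent to $x_k[n] = k$, and applying the equivalence at $j = k{-}1$ gives $x_k[n] = k \iff \partial F_k^{k-1}(n) = 1$. This last case also covers $k=1$ trivially, because then $x_1 = 1^\infty$ and $\partial F_1^0(n) = 1$ always.

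There is no real obstacle here: once Equation~\eqref{e:FC>j} is in hand, the argument is just reading off the difference of a counting function. The only point worth noting is that one does not try to extend the equivalence ``$\partial F_k^j(n) = 1 \iff x_k[n] > j$'' beyond $j < k$, which is why the proposition treats $x_k[n] = k$ with a one-sided condition and why $\partial F_k^k(n)$ is left unconstrained in that case.
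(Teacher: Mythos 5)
Your proof is correct, but it takes a cleaner route than the paper's. You work directly from Equation~\eqref{e:FC>j}, which is valid for all $0\le j<k$, and extract the single biconditional $\partial F_k^j(n)=1 \iff x_k[n]>j$; both halves of the proposition then follow by pure logic (apply it at $j{-}1$ and $j$ for the first claim, at $j=k{-}1$ for the second, with $k=1$ degenerate). The paper instead differences the specific-letter identity~\eqref{e:FCj}, obtaining $\partial F_k^{j-1}(n)-\partial F_k^j(n)=C_k^{(=j)}(n{+}1)-C_k^{(=j)}(n)$, combines this with Proposition~\ref{FC1} (the case $j=1$) and with the structural fact that the sequence $\big(\partial F_k^j(n)\big)_{j\ge 0}$ is a block of ones followed by zeros, and then runs a three-way case analysis on the value $\ell=x_k[n]$ with a propagation argument. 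Your version avoids the case analysis and the propagation entirely, at the cost of not exhibiting that ``block of ones then zeros'' picture, which the paper also uses as expository motivation just before the proposition. Your remark about why the equivalence must not be pushed to $j=k$ (there $C_k^{(>k)}\equiv 0$ and \eqref{e:FC>j} no longer applies) correctly explains why the letter $k$ only gets a one-sided condition and $\partial F_k^k(n)$ stays unconstrained.
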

\begin{proof}
First, this statement is obvious for $k=1$. We now assume that $k>1$.
By subtracting Equation~\eqref{e:FCj} for $n+1$ and~$n$, we obtain
for each $1 \le j < k$ that
\[ \partial F_k^{j-1}(n) - \partial F_k^j(n) =
  C_k^{(=j)}(n+1)-C_k^{(=j)}(n).
\]
This amounts to~1 if and only if $x_k[n]=j$, and 0 otherwise.
So in particular for each $i<x_k[n]$ we have
$\partial F_k^{i-1}(n) = \partial F_k^{i}(n)$.
Meanwhile, we just noticed in Proposition~\ref{FC1} that
$\partial F_k(n) = 0$ if and only if $x_k[n]=1$. Write $\ell = x_k[n]$.
Three situations may occur:
\begin{itemize}
\item Either $\ell=1$, and we directly have $\partial F_k(n) = 0$ and
$\partial F_k^0(n) = 1$.
\item Or $1<\ell<k$ and $\partial F_k(n) = 1$. We can propagate
$\partial F_k^i(n) = 1$ for all $i<\ell$ and finish with
$\partial F_k^{\ell-1}(n) - \partial F_k^{\ell}(n) = 1$.
Hence $\partial F_k^\ell(n) = 0$.
\item Lastly, if $\ell=k$, then the propagation $\partial F_k^i(n) = 1$ goes up to $i \le k-1$.
Hence the desired statement follows. \qedhere
\end{itemize}
\end{proof}

\section{Related polynomials and algebraic integers}
\label{s:poly}

We now introduce two families of polynomials whose positive zeros
appear as average slopes for $F_k$ and~$L_k$
and letter frequencies for~$x_k$ (Section~\ref{s:infini}).

\begin{definition} \label{d:alphabeta}
For $k\ge 1$, we define $P_k(X) = X^k+X-1$ and
$Q_k(X) = X^k-X^{k-1}-1$. We let $\alpha_k$ (and $\beta_k$ respectively) be the
unique positive zero of $P_k$ (and $Q_k$ respectively). Note that $\beta_k = 1/\alpha_k$.
\end{definition}

A more complete investigation of these polynomials can be found in
Selmer~\cite{Selmer56} (where the polynomials $X^k\pm(X+1)$
lead to $P_k(-X)$) and Dilcher~\cite{Dilcher1993}.
We now state the elementary results needed in the rest of our work.
In particular, for $k\ge 1$, the polynomial $P_k$ is strictly increasing on
$\mathbb{R_+}$ and admits exactly one zero there. This zero $\alpha_k$
clearly satisfies
$\frac{1}{2}\le \alpha_k < 1$. Moreover the polynomial $P_k$ is the
negation of the reciprocal polynomial of~$Q_k$, i.e.,
$P_k(X)=-X^k\,Q_k(1/X)$ (and vice versa).
As a consequence, the zeros of~$P_k$ are the inverse of the zeros of~$Q_k$
and vice versa. In particular $\beta_k = 1/\alpha_k$ is indeed the
unique zero of $P_k$ in $\mathbb{R_+}$ and furthermore $1 < \beta_k \le 2$.

Figure~\ref{f:alphabeta} gives approximate values for the first
$\alpha_k$ and~$\beta_k$. In particular, note that $\beta_1 = 2$ and
$\beta_2$ is the golden ratio $\varphi = \frac{1+\sqrt 5}{2}$.
Thanks to the rational zero theorem, one can easily show that
$\alpha_k$ and~$\beta_k$ are irrational for $k\ge 2$.

\begin{figure}[ht]
\[
\begin{aligned}[c]
 \alpha_1 &= 0.5 \\
 \alpha_2 &= 0.6180339887498948\ldots \\
 \alpha_3 &= 0.6823278038280193\ldots \\
 \alpha_4 &= 0.7244919590005157\ldots \\
 \alpha_5 &= 0.7548776662466925\ldots \\
 \alpha_6 &= 0.7780895986786012\ldots
\end{aligned}
\qquad
\begin{aligned}[c]
 \beta_1 &= 2 \\
 \beta_2 &= 1.618033988749895\ldots \\
 \beta_3 &= 1.465571231876768\ldots \\
 \beta_4 &= 1.380277569097614\ldots \\
 \beta_5 &= 1.324717957244746\ldots \\
 \beta_6 &= 1.285199033245349\ldots
\end{aligned}
\]
\caption{The first $\alpha_k$ and $\beta_k$,
    positive zeros of $X^k + X - 1$ and $X^k - X^{k-1} - 1$.}
\label{f:alphabeta}
\end{figure}

\begin{proposition}
\label{p:alpha-incr}
The sequence $(\alpha_k)_{k\in\mathbb{N}_+}$ is strictly increasing while 
$(\beta_k)_{k\in\mathbb{N}_+}$ is strictly decreasing.
Moreover, for $k\ge 1$, we have $1 + \frac{1}{k} \le \beta_k \le 1 + \frac{1}{\sqrt{k}}$
or equivalently $\sqrt{k} \le \beta_k^{k-1} \le k$.
Hence the $(\beta_k)$ sequence converges to $1$, as well as the
$(\alpha_k)$ sequence.
Actually, the following asymptotic expressions hold for $k\to\infty$:
\begin{align}
\alpha_k &= 1 - \frac{\ln{k}}{k} + o\bigg(\frac{\ln{k}}{k}\bigg),
\label{e:oalpha} \\
\beta_k &= 1 + \frac{\ln{k}}{k} + o\bigg(\frac{\ln{k}}{k}\bigg).
\label{e:obeta}
\end{align}
\end{proposition}

\begin{proof}
Let $k\ge 1$. Suppose $\alpha_{k+1} \le \alpha_{k}$. Since
$\alpha_k$ and $\alpha_{k+1}$ are in $(0,1)$, we would have $\alpha_{k+1}^{k+1} = \alpha_{k+1}^k \alpha_{k+1} < \alpha_{k+1}^k \le \alpha_k^k$, leading to $0 = P_{k+1}(\alpha_{k+1}) < P_k(\alpha_k) = 0$, a
contradiction.
Hence $(\alpha_k)$ is strictly increasing and $(\beta_k)=(\alpha_k^{-1})$
is strictly decreasing.

Now $Q_k(\beta_k) = 0$ can be reformulated as
$\beta_k^{k-1}(\beta_k - 1) = 1$ and hence
\begin{equation} \label{e:alphakm1}
\alpha_k^{k-1} = \beta_k^{-(k-1)} = \beta_k-1.
\end{equation}
This provides the
equivalence between $\sqrt{k} \le \beta_k^{k-1} \le k$ and
$1 + \frac{1}{k} \le \beta_k \le 1 + \frac{1}{\sqrt{k}}$.

For proving the lower bound $\sqrt{k} \le \beta_k^{k-1}$,
it is sufficient to note that
\begin{equation} \label{e:betageom}
1 + \beta_k + \cdots + \beta_k^{k-1} =
\frac{\beta_k^k-1}{\beta_k-1} = \frac{\beta_k^{k-1}}{\beta_k^{-(k-1)}}
  = \beta_k^{2(k-1)}.
\end{equation}
Each term on the left of Equation~\eqref{e:betageom} is 1 or larger,
so $k \le \beta_k^{2(k-1)}$, hence the desired lower bound.
For the upper bound $\beta_k^{k-1} \le k$, we divide
by $\beta_k^{k-1}$ in Equation~\eqref{e:betageom} and obtain
\[
\beta_k^{k-1} = \frac{\beta_k^{2(k-1)}}{\beta_k^{k-1}}
= \frac{1}{\beta_k^{k-1}} + \cdots + \frac{\beta_k^{k-1}}{\beta_k^{k-1}}
= \alpha_k^{k-1} + \cdots + \alpha_k + 1 \le k.
\]

The two asymptotic expressions \eqref{e:oalpha} and \eqref{e:obeta}
are equivalent: considering the multiplicative inverse of any of these
expressions leads to the other one.
An approximation related to \eqref{e:oalpha} appears in
Selmer~\cite[Eq.~4.4]{Selmer56} but without further explanation,
while \eqref{e:obeta} is a direct consequence of
Dilcher~\cite[Lem.~3]{Dilcher1993}. Alternatively, we give here a
simple direct proof of \eqref{e:oalpha}.
For $k\ge 1$, we consider the function $h_k(x)=\sqrt[k]{1-x}$ and note that
$\alpha_k$ is a fixed point of $h_k$. It could be proved that
iterating $h_k$ on any initial point in $(0,1)$ converges to
$\alpha_k$, but here it suffices to use a few initial
terms of such a sequence, namely $u_0=1-1/e$ and $u_1 = h_k(u_0)$ and
$u_2 = h_k(u_1)$ and $u_3 = h_k(u_2)$.
For these values, it is straightforward to compute
\begin{align*}
u_1 &= e^{-1/k} = 1 - \frac{1}{k} + o\bigg(\frac{1}{k}\bigg), \\
u_2 &= 1 - \frac{\ln{k}}{k} + o\bigg(\frac{1}{k}\bigg), \\
u_3 &= 1 - \frac{\ln{k}}{k} + \frac{\ln{\ln{k}}}{k} +
o\bigg(\frac{1}{k}\bigg) .
\end{align*}
Meanwhile, a basic study of the function
$h_k$ ensures that $u_0 < u_2 < \alpha_k < u_3 < u_1$ when $k \ge 3$. 
The desired expression \eqref{e:oalpha} follows by weakening
the previous asymptotic expressions for both $u_2$ and $u_3$
to $1-\ln{k}/k+o(\ln{k}/k)$.

\end{proof}

\section{Infinitary behavior}
\label{s:infini}

For each $k\ge 1$, the substitution~$\tau_k$ is a \emph{primitive}
morphism~\cite{AlloucheCSTaxonomy};
i.e., there exists an exponent $p\ge 1$ such that
for all letters $1\le i,j \le k$, the letter
$i$ occurs in $\tau_k^{p}(j)$. Here, the first adequate exponent is $p=2k-2$.
Indeed, $\tau_k^{2k-2}(j)$ admits $\tau_k^{k-1}(k) = k1\cdots (k-1)$ as
a prefix because $\tau_k^{k-1}(j)$ has $k$ as first letter.
Since $x_k$ is the fixed point of a primitive morphism, each 
letter~$i$, $1\le i \le k$, has a \emph{frequency}, which we denote by $\freq_k(i)$.
More precisely, the limit $\freq_k(i) = \lim_{n\to\infty} \tfrac{1}{n} C_k^{(=i)}(n)$
exists, we have $0 < \freq_k(i) \le 1$, and there exists a formula
for computing it \cite[Thm.~8.4.7]{allouche_shallit_2003}. 
Here, an easier approach for computing
$\freq_k(i)$ is to consider~$F_k$.

\begin{theorem}
\label{t:limits}
For $j\ge 0$ and $k\ge 1$, the following limits exist and have the
given values, where $\alpha_k$ and $\beta_k$ come from
Definition~\ref{d:alphabeta}:
\begin{align*}
\lim_{n\to\infty} \tfrac{1}{n}F_k^j(n) &= \alpha_k^j, \\
\lim_{n\to\infty} \tfrac{1}{n}L_k^j(n) &= \beta_k^j, \\
\freq_k(i) &= \alpha_k^{k+i-1}, \hspace{5em} {\text for}\ 1 \le i < k, \\
\freq_k(k) &= \alpha_k^{k-1} = \beta_k - 1.
\end{align*}
\end{theorem}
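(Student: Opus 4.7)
The plan proceeds in three stages: first establish $\lim \frac{1}{n} F_k^j(n) = \alpha_k^j$, then deduce the $L_k^j$ limit via the Galois connection of Corollary~\ref{galois}, and finally read off the letter frequencies from Proposition~\ref{p:counting}. The case $k=1$ is trivial, since $F_1^j(n) = \lceil n/2^j\rceil$, $L_1^j(n) = 2^j n$ and $x_1 = 1^\infty$ all give the claimed values with $\alpha_1 = 1/2$, $\beta_1 = 2$, so I focus on $k \ge 2$.

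To begin, since $x_k$ is the fixed point of a primitive substitution, the cited general theorem on primitive morphic words ensures that each $\freq_k(i)$ exists and is strictly positive. Combined with Proposition~\ref{p:counting}, this already yields the existence of $\lambda_j := \lim_{n\to\infty} \frac{1}{n} F_k^j(n)$ for every $0 \le j \le 2k-2$: each such $F_k^j(n)$ equals either a letter count $C_k^{(>j)}(n)$ or a shifted count $C_k^{(=i)}(n+i)$, and a bounded shift of the argument does not affect the limit of the ratio. Moreover each $\lambda_j$ in this range is strictly positive, being either a single $\freq_k(i)$ or a partial sum of the $\freq_k$'s.

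Next I would show $\lambda_j = \lambda_1^j$ for all $j \ge 0$. Positivity of $\lambda_j$ forces $F_k^j(n) \to \infty$, so writing $\frac{F_k^{j+1}(n)}{n} = \frac{F_k(F_k^j(n))}{F_k^j(n)} \cdot \frac{F_k^j(n)}{n}$ yields $\lambda_{j+1} = \lambda_1 \lambda_j$, and induction gives $\lambda_j = \lambda_1^j$, the limit existing for every $j \ge 0$. To identify $\lambda_1$, I would divide the recurrence $F_k(n) = n - F_k^k(n{-}1)$ of Proposition~\ref{basicF} by $n$ and pass to the limit: $\lambda_1 = 1 - \lambda_k = 1 - \lambda_1^k$. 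Hence $\lambda_1 > 0$ is a positive root of $P_k$, so $\lambda_1 = \alpha_k$ by Definition~\ref{d:alphabeta}, and $\lim \frac{1}{n} F_k^j(n) = \alpha_k^j$ for all $j \ge 0$.

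For $L_k^j$, I invoke the Galois insertion $F_k^j \circ L_k^j = \mathrm{id}$ from Corollary~\ref{galois}: setting $m_n = L_k^j(n)$, we have $m_n \ge n$ so $m_n \to \infty$, and $\frac{n}{m_n} = \frac{F_k^j(m_n)}{m_n} \to \alpha_k^j$, giving $\lim \frac{1}{n} L_k^j(n) = \alpha_k^{-j} = \beta_k^j$. The frequency identities then fall out immediately: equation~\eqref{e:FCk} gives $\freq_k(k) = \lim \frac{1}{n} C_k^{(=k)}(n) = \lim \frac{1}{n} F_k^{k-1}(n) = \alpha_k^{k-1}$, which equals $\beta_k - 1$ by equation~\eqref{e:alphakm1}; and for $1 \le i < k$, equation~\eqref{e:FCi} gives $\freq_k(i) = \lim \frac{1}{n} C_k^{(=i)}(n+i) = \lim \frac{1}{n} F_k^{k+i-1}(n) = \alpha_k^{k+i-1}$. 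The main obstacle is the apparent circularity between existence of the $F_k^j$ limits and existence of the frequencies, which I resolve by borrowing existence from the external primitive-morphic theorem and letting Proposition~\ref{p:counting} supply the $F_k^j$ limits without needing to close the loop by a self-referential argument.
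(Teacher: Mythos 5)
Your proposal is correct and follows essentially the same route as the paper: existence of the limits is borrowed from the primitive-morphic frequency theorem via Proposition~\ref{p:counting} (the paper uses $F_k(n)=n-C_k^{(=1)}(n)$, you use the equivalent $C_k^{(>1)}$), the value $\alpha_k$ is then identified from the recurrence, iterates are handled by the same composition/telescoping idea, $L_k^j$ via the Galois insertion, and the frequencies via Equations~\eqref{e:FCk} and~\eqref{e:FCi}. No gaps worth noting.
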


\begin{proof}
Recall from Proposition~\ref{FC1} that for $k>1$ and $n\ge 0$,
we have $F_k(n) = n-C_k^{(=1)}(n)$. Hence the limit
$\lim_{n\to\infty} \tfrac{1}{n} F_k(n)$ exists as
well\footnote{Surprisingly, we have not found any obvious methods for
proving this convergence of $\tfrac{1}{n} F_k(n)$ directly from the
recursive definition of $F_k$, without using this ``detour'' via
words.}
and is $1 - \freq_k(1)$.
In the case $k=1$, we have $F_1(n)=\lceil n/2\rceil$. Hence
$\lim_{n\to\infty} \tfrac{1}{n} F_1(n) = \tfrac{1}{2}$.

Now that $\frac{1}{n} F_k(n)$ is known to converge to some finite
limit~$\ell$, computing this limit is quite
straightforward. Indeed,
the recursive equation of~$F_k$ can be
reformulated as
\[
\frac{F_k(n)}{n} = 1 -
\frac{F_k(F_k^{k-1}(n-1))}{F_k^{k-1}(n-1)}\cdots\frac{F_k(n-1)}{n-1}\,\frac{n-1}{n}
\]
for $n>1$. Each fraction can be shown to converge to~$\ell$,
except $\tfrac{n-1}{n}$, which tends to~1.
Hence $\ell = 1  -  \ell^k$ and obviously $\ell$ is a nonnegative real
number. Hence $\ell = \alpha_k$.
As a consequence, for $k\ge 2$, the frequency $\freq_k(1)$ is
$1 - \alpha_k = \alpha_k^k$ (and~$1$ when $k=1$).
The same telescoping technique gives
$\lim_{n\to\infty}\tfrac{1}{n}F_k^j(n) = \alpha_k^j$.

Concerning $L_k^j$, a consequence of Theorem~\ref{t:1} is
$F_k^j(L_k^j(n)) = n$. Hence
\[
\frac{L_k^j(n)}{n} = \bigg(\frac{F_k^j(L_k^j(n))}{L_k^j(n)}\bigg)^{-1}.
\]
Since $L_k^{j}(n)\ge n$, this implies that $\tfrac{1}{n} L_k^j(n)$
converges, and its limit is $\alpha_k^{-j} = \beta_k^j$.

For the frequency of the letter $k$, Equation~\eqref{e:FCk} implies
that $\freq_k(k) = \alpha_k^{k-1}$, which is also $\beta_k - 1$ by
Equation~\eqref{e:alphakm1}.
And for the frequencies of
the other letters $1 \le i < k$, one may exploit either
Equation~\eqref{e:FCj} or Equation~\eqref{e:FCi}. For instance, the
former leads to
\[
\freq_k(i) = \alpha_k^{i-1}-\alpha_k^i = \alpha_k^{i-1} (1 - \alpha_k) =
\alpha_k^{k+i-1}.
\]
In particular, this subsumes the case $i=1$ seen earlier.
Finally, one may check that the sum of all these frequencies, from
$\alpha_k^{k-1}$ (letter~$k$) to $\alpha_k^{2k-2}$ (letter~$k-1$), is
of course~1.
\end{proof}

\begin{definition}
For two functions $f,g: \mathbb{N}\to\mathbb{N}$, we say that
$f$ is \emph{ultimately smaller} than~$g$ and write $f <_\infty g$
when there exists $N$ such that $f(n) < g(n)$ for all $n\ge N$.
\end{definition}

\begin{corollary}\label{F_lt_infty}
When $\alpha_k^j < \alpha_{k'}^{j'}$ for some
$k,k'\ge 1$ and $j,j'\ge 0$, then $F_k^j <_\infty F_{k'}^{j'}$.
\end{corollary}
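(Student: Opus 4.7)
The plan is to derive the strict inequality $F_k^j(n) < F_{k'}^{j'}(n)$ for large $n$ directly from the two limits given by Theorem~\ref{t:limits}, using a standard $\varepsilon$-separation argument. Since the gap between the asymptotic slopes $\alpha_k^j$ and $\alpha_{k'}^{j'}$ is a fixed positive real, multiplying by $n$ amplifies it to a gap that dominates the vanishing relative error coming from convergence of the normalized sequences.

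Concretely, I would set $\delta = \alpha_{k'}^{j'} - \alpha_k^j > 0$ and choose $\varepsilon = \delta/3$. By Theorem~\ref{t:limits}, the sequences $\tfrac{1}{n}F_k^j(n)$ and $\tfrac{1}{n}F_{k'}^{j'}(n)$ converge respectively to $\alpha_k^j$ and $\alpha_{k'}^{j'}$, so there exist $N_1, N_2$ such that for all $n \ge N_1$,
\[
\frac{F_k^j(n)}{n} < \alpha_k^j + \varepsilon,
\]
and for all $n \ge N_2$,
\[
\frac{F_{k'}^{j'}(n)}{n} > \alpha_{k'}^{j'} - \varepsilon.
\]
Setting $N = \max(N_1, N_2, 1)$, for any $n \ge N$ we have
\[
F_k^j(n) < n(\alpha_k^j + \varepsilon) = n(\alpha_{k'}^{j'} - 2\varepsilon) < n(\alpha_{k'}^{j'} - \varepsilon) < F_{k'}^{j'}(n),
\]
where the middle equality uses $\alpha_k^j + \varepsilon = \alpha_{k'}^{j'} - 2\varepsilon$, which follows from the choice $\varepsilon = \delta/3$. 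This establishes $F_k^j <_\infty F_{k'}^{j'}$ by definition.

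There is essentially no obstacle here: the corollary is a direct translation of the asymptotic statement of Theorem~\ref{t:limits} into the ultimate-ordering language of the preceding definition. The only minor subtlety is ensuring the threshold $N$ is chosen uniformly for both convergence statements, but taking the max suffices. No special cases (such as $j=0$, $j'=0$, or $k=1$) need separate treatment, since Theorem~\ref{t:limits} covers all relevant values of $k$ and $j$ uniformly.
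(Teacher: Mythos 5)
Your proof is correct and follows essentially the same route as the paper: the paper's own argument is exactly this $\varepsilon$-separation of the two limits from Theorem~\ref{t:limits}, just stated more tersely. Your explicit choice $\varepsilon=\delta/3$ and the uniform threshold $N$ merely spell out the details the paper leaves implicit.
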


\begin{proof}
Theorem~\ref{t:limits} gives $\lim_{n\to\infty} \frac{1}{n}F_k^j(n) =\alpha_k^j < \alpha_{k'}^{j'} =
 \lim_{n\to\infty}\frac{1}{n}F_{k'}^{j'}(n)$.
For $n$ large enough, both sides become close enough to their
limits, so there must exists $N\ge 1$ such that $n \ge N$ implies
$\tfrac{1}{n}F_k^j(n) < \tfrac{1}{n}F_{k'}^{j'}(n)$ and hence
$F_k^j(n) < F_{k'}^{j'}(n)$.
\end{proof}

\begin{corollary}\label{Fk_lt_FSk_eventually}
For $k\ge 1$, we have $F_k <_\infty F_{k+1}$,
$F_k^k >_\infty F_{k+1}^{k+1}$ and
$F_k^{k+1} >_\infty F_{k+1}^{k+2}$.
More generally, for $j\ge 0$ and $k \ge 1$:
\begin{enumerate}[label=(\alph*)]
\item we have $F_k^{k+j} >_\infty F_{k+1}^{k+j+1}$ when
$\alpha_{k} \ge \tfrac{j}{j+1}$ (hence in particular when $k \ge j^2$);
\item we have $F_k^{k+j} <_\infty F_{k+1}^{k+j+1}$ when
$\alpha_{k+1} \le \tfrac{j}{j+1}$ (hence in particular when $1\le k < j$).
\end{enumerate}
\end{corollary}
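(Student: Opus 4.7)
The plan is to reduce every comparison to an inequality between the limit slopes $\alpha_k^j$ via Corollary~\ref{F_lt_infty}, and then to exploit the defining identity $\alpha_k^k = 1 - \alpha_k$ coming from $P_k(\alpha_k)=0$. The three named cases fall out quickly: for $F_k <_\infty F_{k+1}$ I just invoke strict monotonicity of $(\alpha_k)$ from Proposition~\ref{p:alpha-incr}; for $F_k^k >_\infty F_{k+1}^{k+1}$ I rewrite $\alpha_k^k = 1-\alpha_k$ and $\alpha_{k+1}^{k+1} = 1-\alpha_{k+1}$, so the inequality becomes $\alpha_k < \alpha_{k+1}$ again.

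For the general case $F_k^{k+j} >_\infty F_{k+1}^{k+j+1}$, the crucial rewriting is
\[
\alpha_k^{k+j} = \alpha_k^j\,\alpha_k^k = \alpha_k^j(1-\alpha_k),\qquad \alpha_{k+1}^{k+j+1} = \alpha_{k+1}^j(1-\alpha_{k+1}).
\]
So I am comparing the values of $f(x) = x^j(1-x)$ at $\alpha_k$ and $\alpha_{k+1}$. An elementary calculus computation gives $f'(x) = x^{j-1}\bigl(j-(j+1)x\bigr)$, so $f$ increases on $(0, j/(j{+}1)]$ and decreases on $[j/(j{+}1),1)$. Since $(\alpha_k)$ is strictly increasing (Proposition~\ref{p:alpha-incr}), whenever $\alpha_k \ge j/(j{+}1)$ both $\alpha_k$ and $\alpha_{k+1}$ lie in the decreasing branch and $f(\alpha_k) > f(\alpha_{k+1})$, which is what we want. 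Conversely, if $\alpha_{k+1} \le j/(j{+}1)$ then both lie in the increasing branch and the inequality reverses. This already covers the cases $j=0$ (the statement for $F_k^k$) and $j=1$ (the statement for $F_k^{k+1}$, since then $j/(j{+}1) = 1/2 \le \alpha_1 \le \alpha_k$).

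To finish, I translate the thresholds $\alpha_k \ge j/(j{+}1)$ and $\alpha_{k+1} \le j/(j{+}1)$ into sufficient conditions on $k$ using the bounds $1+\tfrac{1}{k} \le \beta_k \le 1+\tfrac{1}{\sqrt{k}}$ from Proposition~\ref{p:alpha-incr}. Taking reciprocals yields $\tfrac{\sqrt{k}}{\sqrt{k}+1} \le \alpha_k \le \tfrac{k}{k+1}$. For the first direction, $k \ge j^2$ gives $\alpha_k \ge \tfrac{\sqrt{k}}{\sqrt{k}+1} \ge \tfrac{j}{j+1}$; for the opposite direction, $k < j$ gives $\alpha_{k+1} \le \tfrac{k+1}{k+2} \le \tfrac{j}{j+1}$ (both translations are straightforward cross-multiplications).

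I do not expect a genuine obstacle here: the whole argument is the observation that $\alpha_k^{k+j}$ has a clean factored form $\alpha_k^j(1-\alpha_k)$, reducing the question to the unimodal behaviour of $x \mapsto x^j(1-x)$. The only step that requires a little care is the bookkeeping between the threshold $j/(j+1)$ and the explicit lower/upper bounds on $\alpha_k$ in terms of $k$, but the bounds from Proposition~\ref{p:alpha-incr} are exactly tailored for it.
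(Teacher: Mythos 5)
Your proposal is correct and follows essentially the same route as the paper: reduce to the slopes via Corollary~\ref{F_lt_infty}, rewrite $\alpha_k^{k+j}=\alpha_k^j(1-\alpha_k)$, use the unimodality of $x\mapsto x^j(1-x)$ with peak at $j/(j{+}1)$, and convert the thresholds into conditions on $k$ via the bounds of Proposition~\ref{p:alpha-incr}. The only (harmless) difference is that you deduce the case $j=1$ from the general unimodality argument using $\alpha_k\ge\tfrac12$, whereas the paper proves it by the explicit factorization $(\alpha_{k+1}-\alpha_k)(\alpha_k+\alpha_{k+1}-1)>0$.
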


\begin{proof}
All these facts are obtained by the previous corollary; we just have
to compare the corresponding average slopes.
For $F_k <_\infty F_{k+1}$, Proposition~\ref{p:alpha-incr}
directly gives $\alpha_k < \alpha_{k+1}$. For $F_k^k >_\infty F_{k+1}^{k+1}$,
we have $\alpha_k^k = 1  -  \alpha_k > 1  -  \alpha_{k+1} = \alpha_{k+1}^{k+1}$.
Now, for $F_k^{k+1} >_\infty F_{k+1}^{k+2}$, we have
\[
\alpha_k^{k+1} - \alpha_{k+1}^{k+2}
= (1 - \alpha_k)\alpha_k - (1 - \alpha_{k+1})\alpha_{k+1}
= (\alpha_{k+1} - \alpha_k)(\alpha_k + \alpha_{k+1} - 1) > 0
\]
since $\tfrac{1}{2} \le \alpha_{k} < \alpha_{k+1}$.
More generally, let $j\ge 0$. As before,
\[
\alpha_k^{k+j} - \alpha_{k+1}^{k+j+1}
= (1 - \alpha_k)\alpha_k^j-(1 - \alpha_{k+1})\alpha_{k+1}^{j}.
\]
The function $(1 - X)X^j$ is strictly increasing between
$0$ and~$\frac{j}{j+1}$ and strictly decreasing afterwards.
Since we always have $\alpha_{k} < \alpha_{k+1}$, then
$\alpha_k^{k+j} - \alpha_{k+1}^{k+j+1} > 0$ at least when
$\alpha_{k} \ge \frac{j}{j+1}$ or equivalently
when $\beta_{k} \le 1+\frac{1}{j}$.
Thanks to the bounds in Proposition~\ref{p:alpha-incr},
this happens at least when $\sqrt{k} \ge j$, i.e., $k \ge j^2$.
Conversely,
$\alpha_k^{k+j} - \alpha_{k+1}^{k+j+1} < 0$ at least when
$\alpha_{k+1} \le \tfrac{j}{j+1}$, for which a sufficient condition
is $1 \le k < j$, still thanks to the bounds in
Proposition~\ref{p:alpha-incr}.
\end{proof}

Corollary~\ref{Fk_lt_FSk_eventually} is to be compared with
the results of the next section, for instance Theorem~\ref{t:2a}:
The latter gives only non-strict inequalities, but for all $n\ge 0$,
while here we have proved strict inequalities, but only when $n$ is
larger than some bounds. Moreover, we do not have explicit estimates for
these bounds for the moment.

As a related matter, we can estimate the ratio of numbers having
unique preimages by~$F_k$.

\begin{proposition}
\label{p:one-antecedent}
For $k\ge 1$, $n\ge 0$, let us call
$U_k(n) = \#\{ 0 \le j < n: |F_k^{-1}(\{j\})|=1 \}$.
Then, for $n>0$, we have
\[
U_k(n) = 2n-1-L_k(n-1) = n - F_k^{k-1}(n-1),
\]
and hence
\[
\lim_{n\to\infty} \tfrac{1}{n}U_k(n) = 1 - \alpha_k^{k-1} = 2 - \beta_k.
\]
\end{proposition}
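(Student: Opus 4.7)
The plan is to unwind the definition of $U_k(n)$ using Theorem~\ref{t:1}, then convert the resulting count into the two claimed expressions via the identities already established in Section~\ref{s:count}, and finally read off the limit from Theorem~\ref{t:limits}.

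First I would compute the size of a fiber $F_k^{-1}(\{j\})$. For $j\ge 1$, Theorem~\ref{t:1} (case $j=1$) gives $F_k^{-1}(\{j\}) = (L_k(j{-}1), L_k(j)]\cap\mathbb{N}$, so its cardinality is $L_k(j)-L_k(j{-}1)$, which by Proposition~\ref{basicL}(b) equals $|\tau_k(x_k[j{-}1])|$. From the definition of $\tau_k$ this length is $2$ when $x_k[j{-}1]=k$ and $1$ otherwise. The case $j=0$ has to be treated separately: since $F_k(0)=0$ and $F_k(n)\ge 1$ for $n\ge 1$ by Proposition~\ref{basicF}, we have $F_k^{-1}(\{0\})=\{0\}$, of cardinality $1$. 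Combining these two observations yields, for $n\ge 1$,
\[
U_k(n) \;=\; 1 \,+\, \#\{1 \le j < n : x_k[j{-}1]\ne k\} \;=\; 1 + C_k^{(\ne k)}(n{-}1).
\]

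Then I would translate this count into the two required forms. Writing $C_k^{(\ne k)}(n{-}1) = (n{-}1) - C_k^{(=k)}(n{-}1)$ and invoking Equation~\eqref{e:FCk} gives immediately $U_k(n) = n - F_k^{k-1}(n{-}1)$. For the other expression, Proposition~\ref{p:LCF} says $L_k(n{-}1) = (n{-}1) + F_k^{k-1}(n{-}1)$, so substituting $F_k^{k-1}(n{-}1) = L_k(n{-}1) - (n{-}1)$ yields $U_k(n) = 2n - 1 - L_k(n{-}1)$.

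Finally, for the asymptotic, divide the identity $U_k(n)= n - F_k^{k-1}(n{-}1)$ by $n$:
\[
\tfrac{1}{n}U_k(n) \;=\; 1 - \tfrac{n-1}{n}\cdot\tfrac{F_k^{k-1}(n-1)}{n-1}.
\]
By Theorem~\ref{t:limits}, $\tfrac{1}{m}F_k^{k-1}(m)\to \alpha_k^{k-1}$ as $m\to\infty$, and $(n{-}1)/n\to 1$, so the right-hand side tends to $1-\alpha_k^{k-1}$. Equation~\eqref{e:alphakm1} rewrites this as $1-(\beta_k-1) = 2-\beta_k$, which finishes the proof.

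There is no genuine obstacle here: the only subtlety is remembering to include the fiber at $j=0$ as an extra $+1$ (since Theorem~\ref{t:1} is stated for $n\ge 1$), after which everything is a straightforward rewriting using Equation~\eqref{e:FCk}, Proposition~\ref{p:LCF}, and Equation~\eqref{e:alphakm1}.
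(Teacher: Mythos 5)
Your proposal is correct; every step checks out, including the special treatment of the fiber over $0$ and the use of Theorem~\ref{t:1}, Proposition~\ref{basicL}(b), Equation~\eqref{e:FCk}, Proposition~\ref{p:LCF}, Equation~\eqref{e:alphakm1} and Theorem~\ref{t:limits}. Your route differs from the paper's in how the count is organized. You compute each fiber explicitly: for $j\ge 1$, $F_k^{-1}(\{j\})=(L_k(j{-}1),L_k(j)]\cap\mathbb{N}$ has cardinality $\big|\tau_k(x_k[j{-}1])\big|$, so a value $j\ge 1$ has a unique antecedent exactly when $x_k[j{-}1]\neq k$, and then $U_k(n)=1+C_k^{(\neq k)}(n{-}1)$ is turned into the stated formulas via Equation~\eqref{e:FCk}. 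The paper instead argues by double counting: each of the $n$ values $0,\ldots,n{-}1$ has one or two antecedents (discussion after Proposition~\ref{FC1}), and by Theorem~\ref{t:1} their antecedents fill $\{0,\ldots,L_k(n{-}1)\}$, so $U_k(n)+2\bigl(n{-}U_k(n)\bigr)=1+L_k(n{-}1)$, after which Proposition~\ref{p:LCF} and Theorem~\ref{t:limits} conclude as in your argument. The paper's version is slightly shorter and leans on the already-noted ``one or two antecedents'' fact; yours is more self-contained and yields the finer byproduct that uniqueness of the antecedent of $j\ge 1$ is governed by the letter $x_k[j{-}1]$, which is a pleasant explicit link between $U_k$ and the word $x_k$.
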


\begin{proof}
Let $n>0$.
We already know from Theorem~\ref{t:1} and Proposition~\ref{p:LCF}
that the largest element of $F_k^{-1}(\{n-1\})$ is $L_k(n-1) = n-1+F_k^{k-1}(n-1)$.
There are $U_k(n)$ numbers between $0$ and~$n-1$ with a unique preimage, and the other $n - U_k(n)$ numbers have exactly two preimages; see the discussion after
Proposition~\ref{FC1}. By counting the cardinality of 
$F_k^{-1}(\{0,\ldots,n-1\})$, we obtain
\[
U_k(n) + 2(n - U_k(n)) = 1 + L_k(n-1) = n + F_k^{k-1}(n-1).
\]
Combined with Theorem~\ref{t:limits}, this leads to the desired
equations and limits.
\end{proof}

In particular, for $k=1$ and $n>0$ we get $U_1(n) = 1$.
Indeed, since $F_1(n) = \lceil \tfrac{n}{2} \rceil$, the only number with a unique preimage by $F_1$ is~$0$.
Then $\tfrac{1}{n}U_k(n)$ tends to $0.3819\ldots$ for $k=2$,
to $0.5344\ldots$ for $k=3$, and to $0.6197\ldots$ for $k=4$, and these limits
tend to~$1$ as $k$ grows.

\section{Monotonicity over the parameter \texorpdfstring{$k$}{\it k}}
\label{s:kmono}

This section studies the monotonicity of $L_k$ and~$F_k$ when the
parameter $k$ varies. In all this section, we compare functions via
\emph{pointwise order}: an inequality such as $F_k \le F_{k+1}$ means
that $F_k(n) \le F_{k+1}(n)$ \emph{for all} points $n\ge 0$.
As such, the results here are quite stronger than the ones of the
previous section about the infinitary behavior, i.e., when $n$ is large enough.

First, we state a nice duality between functions $F_k^j$ and
$L_k^j$ with respect to this pointwise order, extending the results
of Section~\ref{s:relating_F_L}. Thanks to this, all the following technical lemmas about $L_k$ have immediate counterparts for~$F_k$.

\begin{proposition}\label{galoisbis}
For $j,j'\ge 0$ and $k,k'\ge 1$, we have
$L_k^j \le L_{k'}^{j'}$ if and only if \mbox{$F_{k'}^{j'} \le F_k^j$}.
Furthermore, we can be more precise concerning the relative positions
where these inequalities occur. For $n\ge 0$, we have
\begin{enumerate}[label=(\alph*)]
\itemsep.5ex
\item $L_k^j(n) \le L_{k'}^{j'}(n)$ if and only if
 $F_{k'}^{j'}(m) \le F_k^j(m)$, with $m = L_k^j(n)$,
\item $L_k^j(n) < L_{k'}^{j'}(n)$ if and only if
 $F_{k'}^{j'}(m) < F_k^j(m)$, with $m = L_{k'}^{j'}(n)$,
\item $L_k^j(m) \le L_{k'}^{j'}(m)$ for $m = F_k^j(n)$ implies that $F_{k'}^{j'}(n) \le F_k^j(n)$,
\item $F_k^j(n) < F_{k'}^{j'}(n)$ implies that 
$L_{k'}^{j'}(m) < L_k^j(m)$ for $m = F_{k'}^{j'}(n)$.
\end{enumerate}
\end{proposition}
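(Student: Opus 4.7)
The proof rests entirely on the Galois adjunction from Corollary~\ref{galois}, which says $F_k^j(m)\le n$ is equivalent to $m\le L_k^j(n)$, combined with the retraction $F_k^j(L_k^j(n))=n$ and the interval characterization $F_k^{-j}(\{m\}) = (L_k^j(m-1), L_k^j(m)]$ supplied by Theorem~\ref{t:1}. My plan is to handle the global equivalence first (it actually follows from (a) by varying $n$) and then to dispatch (a)--(d) in turn, each by choosing $m$ so that one side of the desired inequality collapses to an identity.

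For the global iff, if $L_k^j \le L_{k'}^{j'}$ pointwise, then for each $m$, setting $n = F_k^j(m)$ gives $m \le L_k^j(n) \le L_{k'}^{j'}(n)$, and the Galois adjunction for $F_{k'}^{j'}$ and $L_{k'}^{j'}$ returns $F_{k'}^{j'}(m) \le n = F_k^j(m)$; the converse direction is symmetric. For (a), I would set $m = L_k^j(n)$ so that $F_k^j(m) = n$; the inequality $L_k^j(n) \le L_{k'}^{j'}(n)$ literally reads $m \le L_{k'}^{j'}(n)$, which is Galois-equivalent to $F_{k'}^{j'}(m) \le n = F_k^j(m)$. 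Part (b) is symmetric: with $m = L_{k'}^{j'}(n)$ one has $F_{k'}^{j'}(m) = n$, and the strict inequality $L_k^j(n) < m$ is the negation of $m \le L_k^j(n)$, equivalently $F_k^j(m) > n = F_{k'}^{j'}(m)$.

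For (c), I would set $m = F_k^j(n)$; Theorem~\ref{t:1} delivers $n \le L_k^j(m)$, so the hypothesis $L_k^j(m) \le L_{k'}^{j'}(m)$ chains to $n \le L_{k'}^{j'}(m)$, whence $F_{k'}^{j'}(n) \le m = F_k^j(n)$ by Galois. Part (d) follows the same template with $m = F_{k'}^{j'}(n)$: Theorem~\ref{t:1} provides the interval bounds $L_{k'}^{j'}(m-1) < n \le L_{k'}^{j'}(m)$, while the hypothesis $F_k^j(n) < m$ translates via Galois into the sharper $n \le L_k^j(m-1)$, which immediately yields the strict gap $L_{k'}^{j'}(m-1) < L_k^j(m-1)$. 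The remaining claim $L_{k'}^{j'}(m) < L_k^j(m)$ is then obtained by locating $L_{k'}^{j'}(m)$ inside the $F_k^j$-preimage class of $m$ and arguing that it cannot coincide with the top element $L_k^j(m)$. I expect this last reconciliation between the two interval structures at $m$, where the jump sizes of $L_k^j$ and $L_{k'}^{j'}$ need not agree, to be the step requiring the most careful bookkeeping in the formal argument; everything else reduces to one-line Galois manipulations once $m$ is chosen correctly.
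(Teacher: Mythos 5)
Your treatment of the global equivalence and of parts (a)--(c) is correct and essentially coincides with the paper's own argument: pick $m$ so that the retraction $F_k^j\circ L_k^j=\mathrm{id}$ (Theorem~\ref{t:1}) collapses one side, then apply the adjunction of Corollary~\ref{galois}. The paper obtains (b) as the contrapositive of (a) while you argue it directly, but the content is the same; the only missing bookkeeping is the trivial case $n=0$ in (c), where Theorem~\ref{t:1} does not literally apply but the needed inequality $0\le L_k^j(0)$ is immediate (the paper notes this explicitly).

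Part (d) is where you have a genuine gap, and the ``reconciliation'' you defer cannot be carried out. From $F_k^j(n)<m:=F_{k'}^{j'}(n)$ you correctly get $L_{k'}^{j'}(m{-}1)<n\le L_k^j(m{-}1)$, but nothing forces $L_{k'}^{j'}(m)$ to lie in the $F_k^j$-preimage interval of~$m$, and strictness does not transfer from $m{-}1$ to~$m$: take $(k,j)=(5,4)$, $(k',j')=(3,3)$, $n=5$; then $F_5^4(5)=1<2=F_3^3(5)=m$, while $L_3^3(2)=|\tau_3^3(31)|=4+2=6$ and $L_5^4(2)=|\tau_5^4(51)|=5+1=6$, so $L_{k'}^{j'}(m)<L_k^j(m)$ fails. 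Thus (d) with $m=F_{k'}^{j'}(n)$ is not provable as printed; the paper's own one-line justification (``(d) is the contrapositive of (c)'') in fact yields the conclusion at $m=F_k^j(n)$, and that is the version to prove (the printed index appears to be a slip). It follows in one line from bounds you already manipulate: setting $a=F_k^j(n)$ (with $n\ge1$, since otherwise the hypothesis fails), the adjunction turns $a<F_{k'}^{j'}(n)$ into $L_{k'}^{j'}(a)<n$, while Theorem~\ref{t:1} gives $n\le L_k^j(a)$, hence $L_{k'}^{j'}(a)<L_k^j(a)$.
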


\begin{proof}
First, the fact that $L_k^j \le L_{k'}^{j'}$ implies $F_{k'}^{j'} \le F_k^j$
is a consequence of point~(c), while point~(a) implies the
other direction. Also note that (b) is a contrapositive
version of~(a), and the same for (d) and~(c).

For proving point~(a), let $n\ge 0$.
Then $L_k^j(n) \le L_{k'}^{j'}(n)$ is equivalent to
$F_{k'}^{j'}(L_{k}^{j}(n)) \le n$ by Corollary~\ref{galois}, and we have $n = F_k^j(L_k^j(n))$ by Theorem~\ref{t:1}.

For point~(c), we use that $F_{k'}^{j'}(n) \le F_{k}^{j}(n)$
is equivalent to $n \le L_{k'}^{j'}(F_{k}^{j}(n))$ by Corollary~\ref{galois}.
By Theorem~\ref{t:1} (or trivially when $n=0$), we have $n \le L_{k}^{j}(F_{k}^{j}(n))$, and thus $n \le L_{k'}^{j'}(F_{k}^{j}(n))$ when $L_k^j(F_k^j(n)) \le L_{k'}^{j'}(F_k^j(n))$.
\end{proof}

Let us now study the ordering of the functions $L_k^j$.
Note first that
\[
|\tau_k^k(i)| = i+1 = |\tau_k^{k-1}(i)| + 1 \quad \mbox{for all}\ 1 \le i \le k.
\]
For all $k \ge 1$, $n \ge 0$, this implies that $L_k^k(n) = L_k^{k-1}(n) + n$ and thus
\begin{equation} \label{e:kk1}
L_{k+1}^{k+1}(n) - L_k^k(n) = L_{k+1}^k(n) - L_k^{k-1}(n).
\end{equation}

One of our main results is that the sequence of functions $(L_k)$ is monotonic, which we prove here by mutual induction with another property comparing iterations $L_k^j$ and~$L_{k+1}^{j+1}$.

\begin{theorem} \label{p:1}
For all $k,n \ge 1$, $0 \le j \le k$, we have
\begin{align}
  L_k(n) &\ge L_{k+1}(n), \quad \mbox{i.e.,}\ C_k^{(=k)}(n) \ge C_{k+1}^{(=k+1)}(n), \label{e:tau} \\[.5ex]
L_k^j(n) &< L_{k+1}^{j+1}(n). \label{e:tauj}
\end{align}
\end{theorem}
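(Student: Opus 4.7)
The plan is to prove \eqref{e:tau} and \eqref{e:tauj} simultaneously by strong induction on $n \ge 1$, with all relevant values of $j$ treated together at each step. Both statements compare substituted-prefix lengths, so they share the recursive structure $L_k^j(n{+}1) = L_k^j(n) + |\tau_k^j(x_k[n])|$ from Proposition~\ref{basicL}(b), and the proof will be driven by tracking how the various ``gaps'' $L_{k+1}^{j+1}(n) - L_k^j(n)$ evolve as $n$ grows.

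The base case $n = 1$ uses Proposition~\ref{basicL}(c), which gives $L_k^j(1) = j{+}1$ for $0 \le j \le k$. This immediately yields $L_k(1) = 2 = L_{k+1}(1)$, settling \eqref{e:tau}, and $L_k^j(1) = j{+}1 < j{+}2 = L_{k+1}^{j+1}(1)$ for $0 \le j \le k$ (valid since $j{+}1 \le k{+}1$), settling \eqref{e:tauj}.

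For the inductive step at $n+1$, the gap to control is
\[
L_{k+1}^{j+1}(n{+}1) - L_k^j(n{+}1) = \bigl(L_{k+1}^{j+1}(n) - L_k^j(n)\bigr) + \bigl(|\tau_{k+1}^{j+1}(x_{k+1}[n])| - |\tau_k^j(x_k[n])|\bigr),
\]
and analogously for \eqref{e:tau}. A direct computation, based on Proposition~\ref{p:taujk_rec} and the definition of $\tau_k$, gives $|\tau_k^j(i)| = 1$ when $i + j \le k$ and $i + j - k + 1$ otherwise, with a parallel formula for $|\tau_{k+1}^{j+1}(i')|$ sharing the same threshold $k - j$. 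A case analysis on the pair $(x_k[n], x_{k+1}[n])$ then shows the increment difference is nonnegative in most configurations, and drops to $-1$ only in specific ``bad'' ones---for \eqref{e:tau}, exactly when $x_k[n] \ne k$ but $x_{k+1}[n] = k{+}1$.

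The main obstacle I foresee is showing that each bad step occurs only when the prior gap is strictly positive. For \eqref{e:tauj} the strict induction hypothesis provides a buffer of at least $1$, and the case analysis must verify that this buffer survives a bad step. For \eqref{e:tau} equality is permitted in the hypothesis, so one must rule out the bad configuration whenever $L_k(n) = L_{k+1}(n)$. I expect to exploit equation~\eqref{e:kk1}, which links the \eqref{e:tauj}-gap at $j = k$ to the one at $j = k{-}1$ and thereby allows strict information to propagate through iterations, together with the self-similarity $x_k = \tau_k(x_k)$ and the characterization of the letter $k$ via $\partial F_k^{k-1}$ from Proposition~\ref{diffFx}. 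Coordinating these invariants across all values of $j$ is the delicate combinatorial heart of the argument, and the reason \eqref{e:tau} and \eqref{e:tauj} must be proved together rather than in sequence.
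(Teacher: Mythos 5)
Your plan shares the mutual-induction-on-$n$ skeleton with the paper, but the engine you propose for the inductive step --- tracking the gap $L_{k+1}^{j+1}(n)-L_k^j(n)$ one letter at a time and arguing that a ``bad'' position costs at most $1$ --- does not work as stated, and the part you defer (``the delicate combinatorial heart'') is exactly the part that is missing. First, the per-step drop in \eqref{e:tauj} is not bounded by $1$: with $|\tau_k^j(i)|=\max(1,\,i{+}j{-}k{+}1)$, a position where $x_k[n]=k$ but $x_{k+1}[n]$ is a small letter makes the gap fall by $j$ in a single step. Such positions really occur; e.g.\ $x_3[6]=3$ while $x_4[6]=1$, so for $k=3$, $j=2$ the gap $L_4^3-L_3^2$ drops by $2$ at $n=6$. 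Second, even a drop of exactly $1$ is fatal to your scheme: the induction hypothesis for \eqref{e:tauj} only guarantees a buffer of $1$, and losing it yields equality, not the required strict inequality. Third, for \eqref{e:tau} you must exclude the configuration ($x_k[n]\ne k$, $x_{k+1}[n]=k{+}1$) whenever $L_k(n)=L_{k+1}(n)$; invoking \eqref{e:kk1}, self-similarity and Proposition~\ref{diffFx} is only a list of tools, not an argument, and there is no simple position-by-position relation between the letters of $x_k$ and $x_{k+1}$ to lean on --- that is precisely why the problem is hard.

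The paper's proof avoids comparing letters at the same position altogether. For \eqref{e:tau} at $m$ it passes through the Galois connection: by Proposition~\ref{p:LCF} the claim is equivalent to $F_k^{k-1}(m)\ge F_{k+1}^{k}(m)$, which is obtained from Theorem~\ref{t:1} together with the strict inequality \eqref{e:tauj} applied at the \emph{smaller} argument $c=F_k^{k-1}(m)<m$. For \eqref{e:tauj} at $m$ it splits on the last letter of $x_{k+1}[0{:}m)$: if $x_{k+1}[m{-}1]=k{+}1$ the one-step increment comparison is favorable and the hypothesis at $m{-}1$ suffices; otherwise $x_{k+1}[0{:}m)=\tau_{k+1}(x_{k+1}[0{:}\ell))$ for some $\ell<m$ (desubstitution), and one applies \eqref{e:tau} at $\ell$ followed by \eqref{e:tauj} at $\ell$ with exponent $j{+}1$, plus \eqref{e:kk1} to lift $j=k{-}1$ to $j=k$. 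If you want to salvage your letter-by-letter approach, you would need a quantitative invariant on the gap strong enough to absorb drops of size up to $k$ (and to preserve strictness), which is substantially more than what your strong induction hypothesis records; the desubstitution step is the standard way to sidestep this.
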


\begin{proof}
Let $k \ge 1$. 
Since $x_k[0] = k$, $x_{k+1}[0] = k+1$, and $|\tau_k^j(k)| = j+1 < j+2 = |\tau_{k+1}^{j+1}(k+1)|$ for $0 \le j \le k$,  the inequalities \eqref{e:tau} and~\eqref{e:tauj} hold for $n = 1$, $0 \le j \le k$.

Let $m \ge 2$ and assume that \eqref{e:tau} and \eqref{e:tauj} hold
for all $1 \le n < m$, $0 \le j \le k$.

We first prove \eqref{e:tau} at~$m$, i.e.,
$L_k(m) \ge L_{k+1}(m)$, or equivalently
$F_k^{k-1}(m) \ge F_{k+1}^k(m)$; see Proposition~\ref{p:LCF}.
Let us abbreviate $F_k^{k-1}(m)$ as~$c$ and $F_{k+1}^k(m)$ as~$c'$
and prove $c' \le c$ i.e., $c' - 1 < c$. By Proposition~\ref{basicL},
$L_{k+1}^k$ is strictly increasing, and hence it is sufficient to prove
$L_{k+1}^k(c' - 1) < L_{k+1}^k(c)$. 
Indeed, we have
\begin{equation} \label{e:Lkc}
L_{k+1}^k(c' - 1) < m \le L_{k}^{k-1}(c) \le L_{k+1}^{k}(c),
\end{equation}
where the left and middle inequalities come from Theorem~\ref{t:1}.
To obtain the right inequality in \eqref{e:Lkc}, we distinguish between the cases $k=1$ and $k \ge 2$.
When $k = 1$, then $c = m$ and hence
$L_k^{k-1}(c) = m < L_2(m) = L_{k+1}^k(c)$ by
Proposition~\ref{basicL}~(f). When $k\ge 2$, we can
use~\eqref{e:tauj} for $n = c$, $j = k-1$ because in this case
$1 \le c < m$ by Proposition~\ref{basicF}.

Now let $0 \le h \le k$ and let us prove \eqref{e:tauj} for $n = m$ and
$j = h$. Thanks to Equation~\eqref{e:kk1}, the case $j=k$ is implied
by the case $j=k-1$, so now we can freely assume $h<k$.
If $x_{k+1}[m-1] = k+1$, then \eqref{e:tauj} holds for $n = m$ because it holds for $n = m-1$ and $|\tau_k^h(i)| \le |\tau_k^h(k)| < |\tau_{k+1}^{h+1}(k+1)|$ for all $1 \le i \le k$. 
If $x_{k+1}[m-1] \ne k+1$, then $x_{k+1}[0{:}m) =
\tau_{k+1}(x_{k+1}[0{:}\ell))$ for some $\ell \ge 1$.
Proposition~\ref{basicL} indicates that $\ell < L_{k+1}(\ell) = m$.
Hence \eqref{e:tau} holds for $n=\ell$, and thus
\[
m = L_{k+1}(\ell) \le L_k(\ell).
\]
We apply $L_k^h$ (which is monotonic by Proposition~\ref{basicL}) on
this inequality, and then \eqref{e:tauj} for $n=\ell$ and $j=h+1$, to obtain
\[
L_k^h(m) \le L_k^h(L_k(\ell)) = L_k^{h+1}(\ell) < L_{k+1}^{h+2}(\ell) =
L_{k+1}^{h+1}(L_{k+1}(\ell)) = L_{k+1}^{h+1}(m).
\]
Therefore, \eqref{e:tauj} holds for $n = m$, $0 \le j \le k$.
By induction, \eqref{e:tau} and \eqref{e:tauj} hold for all $n \ge 1$,
$0 \le j \le k$.
\end{proof}

\begin{corollary} \label{c:1bis}
For all $k \ge 1$ and $j \ge 0$, we have $L_k^j \ge L_{k+1}^j$.
\end{corollary}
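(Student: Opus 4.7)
The plan is to use a short induction on $j$, with essentially no new ideas beyond Theorem~\ref{p:1}: the heavy lifting is already packaged in inequality~\eqref{e:tau}, and what remains is to iterate it while keeping the argument inside the monotonic function $L_k$. For the base cases, $j=0$ gives $L_k^0 = \mathrm{id} = L_{k+1}^0$, so both sides are equal, and $j=1$ is precisely inequality~\eqref{e:tau} of Theorem~\ref{p:1}.

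For the inductive step, assume $L_k^j(n) \ge L_{k+1}^j(n)$ for every $n \ge 0$, and fix $n \ge 0$. The plan is to insert $L_{k+1}^j(n)$ as an intermediate point and chain two inequalities: monotonicity of $L_k$ (from Proposition~\ref{basicL}) combined with the induction hypothesis gives
\[
L_k^{j+1}(n) \;=\; L_k\bigl(L_k^j(n)\bigr) \;\ge\; L_k\bigl(L_{k+1}^j(n)\bigr),
\]
and then the $j=1$ case (i.e.\ inequality~\eqref{e:tau}) applied at the point $L_{k+1}^j(n)$ yields
\[
L_k\bigl(L_{k+1}^j(n)\bigr) \;\ge\; L_{k+1}\bigl(L_{k+1}^j(n)\bigr) \;=\; L_{k+1}^{j+1}(n).
\]
Concatenating these two inequalities closes the induction.

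There is no genuine obstacle here: the subtle mutual induction between the ``width'' inequality \eqref{e:tau} and the ``depth'' inequality \eqref{e:tauj} was already carried out inside Theorem~\ref{p:1}, precisely so that the pointwise comparison $L_k \ge L_{k+1}$ would be available for re-use. What might look delicate — propagating the base case under iteration — is handled for free by the monotonicity of each $L_k$. Finally, it is worth noting that, via the duality in Proposition~\ref{galoisbis}(c) applied with $(j,j') = (j,j)$, the pointwise inequality $L_k^j \ge L_{k+1}^j$ proved here immediately yields the pointwise inequality $F_{k+1}^j \ge F_k^j$, which in the case $j=1$ is the promised monotonicity statement of Theorem~\ref{t:2}.
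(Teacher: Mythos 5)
Your proof is correct and follows essentially the same route as the paper: induction on $j$ with base cases $j=0$ and $j=1$ (the latter being inequality~\eqref{e:tau}, trivially extended to $n=0$), combining the induction hypothesis with monotonicity in the inductive step. The only (immaterial) difference is that you peel off $L_k$ on the outside, writing $L_k^{j+1}=L_k\circ L_k^j$ and using monotonicity of $L_k$ plus \eqref{e:tau} at the point $L_{k+1}^j(n)$, whereas the paper writes $L_k^{j+1}=L_k^j\circ L_k$ and uses monotonicity of $L_{k+1}^j$ plus \eqref{e:tau} at $n$.
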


\begin{proof}
We proceed by induction on~$j$. The case $j=0$ is obvious. The case
$j=1$ is given by Equation~\eqref{e:tau}, trivially extended to $n=0$.
Now assume $L_{k}^j \ge L_{k+1}^j$ for some $j\ge 0$.
For $n \ge 0$, we hence have
\[
L_k^{j+1}(n) = L_k^j(L_k(n)) \ge L_{k+1}^j(L_k(n)) \ge
L_{k+1}^j(L_{k+1}(n)) = L_{k+1}^{j+2}(n),
\]
thanks to the induction hypothesis for~$j$ and
then the monotonicity of~$L_{k+1}^j$, combined with the statement for $j=1$.
We can hence conclude by induction.
\end{proof}

At last, we obtain the following monotonicity of the function sequence $(F_k)$ over the
parameter~$k$, where the special case $j = 1$ is our main result Theorem~\ref{t:2}.

\begin{theorem} \label{t:2a}
For all $j\ge 0$, $k \ge 1$,  we have $F_k^j \le F_{k+1}^j$.
\end{theorem}

\begin{proof}
Consequence of Corollary~\ref{c:1bis} and Proposition~\ref{galoisbis}.
\end{proof}

Comparing $F_k^j$ with $F_{k+1}^{j+1}$ instead of $F_{k+1}^j$, we obtain the opposite inequality.

\begin{theorem} \label{t:2j}
For all $k \ge 1$ and $0 \le j \le k$, we have $F_k^j \ge F_{k+1}^{j+1}$.
\end{theorem}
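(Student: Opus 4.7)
The plan is to deduce this from Theorem \ref{p:1} (equation \eqref{e:tauj}) via the duality recorded in Proposition \ref{galoisbis}, by exactly the same move that deduces Theorem \ref{t:2} from Corollary \ref{c:1bis}. No new combinatorial work on the words $x_k$ should be required.

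More precisely, I would first observe that \eqref{e:tauj} in Theorem \ref{p:1} states $L_k^j(n) < L_{k+1}^{j+1}(n)$ for all $k \ge 1$, all $0 \le j \le k$, and all $n \ge 1$. For $n = 0$ we have $L_k^j(0) = 0 = L_{k+1}^{j+1}(0)$ by Proposition \ref{basicL}(b), so in fact $L_k^j \le L_{k+1}^{j+1}$ as a pointwise inequality of functions on $\mathbb{N}$, for all $0 \le j \le k$.

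Next I would invoke the general equivalence in Proposition \ref{galoisbis}: $L_k^j \le L_{k'}^{j'}$ if and only if $F_{k'}^{j'} \le F_k^j$. Taking $k' = k{+}1$ and $j' = j{+}1$ converts the pointwise inequality $L_k^j \le L_{k+1}^{j+1}$ into $F_{k+1}^{j+1} \le F_k^j$, which is the desired conclusion. Since this is just a direct application of earlier results, there is no real obstacle; the only minor care needed is to handle the edge case $n = 0$ separately (which is trivial) so that the strict inequality of \eqref{e:tauj} can be weakened to the non-strict hypothesis required by Proposition \ref{galoisbis}.
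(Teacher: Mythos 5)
Your proof is correct and is essentially the paper's own argument: the paper likewise reduces Theorem \ref{t:2j} via Proposition \ref{galoisbis} to a pointwise inequality between $L_k^j$ and $L_{k+1}^{j+1}$, obtained from \eqref{e:tauj} together with the trivial value at $n=0$. (The paper's one-line proof states the equivalent inequality as $L_{k+1}^{j+1} \le L_k^j$, which is an apparent slip for $L_k^j \le L_{k+1}^{j+1}$ --- exactly the direction you use and the one that \eqref{e:tauj} actually provides.)
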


\begin{proof}
Let $k\ge 1$ and $0 \le j \le k$.
By Proposition~\ref{galoisbis}, the statement to prove is equivalent to
$L_{k+1}^{j+1} \le L_k^j$, which is a direct consequence of
Equation~\eqref{e:tauj} and of $L_{k+1}^{j+1}(0)=0=L_k^j(0)$.
\end{proof}

To sum up, the functions $(F_k^j,\le)$ with their pointwise ordering form
(at least) a nice lattice generated by the following basic cells
for $1\le j\le k$:
\begin{center}
\begin{tikzcd}
  F_{k-1}^{j-1} \arrow[r, "\le"]
    & F_{k}^{j-1} & \\
  & F_{k}^{j}
    \arrow[u, "\ge" {rotate=-90,xshift=0.6em,yshift=0.5em}]
    \arrow[lu, "\ \ge" rotate=-30]
    \arrow[r, "\le"']
    & F_{k+1}^{j} \arrow[lu, "\ \ge"' rotate=-30]
\end{tikzcd}
\end{center}
In such a cell, the vertical edge $F_k^j\le F_k^{j-1}$ is obvious for
sub-linear functions such as~$F_k$. Moreover it is also a double
consequence of $F_k^j \le F_{k-1}^{j-1} \le
F_k^{j-1}$ and
$F_k^j \le F_{k+1}^j \le F_k^{j-1}$.
Also note that the remaining unrelated functions
$F_{k-1}^{j-1}$ and~$F_{k+1}^j$ may actually be incomparable. For instance
$F_3^3(5) = 2 > 1 = F_5^4(5)$, while $F_3^3(9) = 3 < 4 = F_5^4(5)$.
Even if we only retain the infinitary behavior
(as in Section~\ref{s:infini}), the ordering of these functions may vary.
The currently known situation is presented in Figure~\ref{f:Fkj-lattice},
where the edges such as $G\longrightarrow H$ mean $G \le H$.
The row $j=0$ of identity functions
has been omitted here, $\mathrm{id}$~being trivially above all other~$F_k^j$.
The numbers displayed in blue alongside the nodes~$F_k^j$ are approximations
of their average slopes $\lim \frac{1}{n}F_k^j(n) = \alpha_k^j$.
If one of the functions is pointwise below another, their average slopes
are ordered accordingly.\footnote{Conversely, strictly ordered
average slopes only give clues about infinitary behavior, the
functions may well be pointwise incomparable due to early values.}
Also note that the diagonal $j=k$ and first row $j=1$ (in red in the
figure) form an interesting chain of inequalities, with slopes ranging
symmetrically between 0 and~1 (since $1 - \alpha_k^k = \alpha_k$).

\newcommand{\divi}[1]{\lceil\tfrac{n}{#1}\rceil}
\begin{figure}[ht]
\begin{tikzcd}
    & k=1 & k=2 & k=3 & k=4 & k=5 \\
j=1 & \red{F_1=\divi{2}}\rar[red]\dar[leftarrow, "0.5~"' {blue,at start}]
      & \red{F_2=G}\rar[red]\dar[leftarrow, "0.618~"' {blue,at start}]
      & \red{F_3=H}\rar[red]\dar[leftarrow, "0.682~"' {blue,at start}]
      & \red{F_4}\rar[red]\dar[leftarrow, "0.724~"' {blue,at start}]
      & \red{F_5}\dar[leftarrow, "0.754~"' {blue,at start}] \\
j=2 & F_1^2=\divi{4}\rar\dar[leftarrow, "0.25~"' {blue,at start}]
      & \red{F_2^2=G^2}\rar\dar[leftarrow, "0.381~"' {blue,at start}]
        \ular[red]
      & F_3^2=H^2\rar\dar[leftarrow, "0.465~"' {blue,at start}]
        \ular
      & F_4^2\rar\dar[leftarrow, "0.524~"' {blue,at start}]
        \ular
      & F_5^2 \dar[leftarrow, "0.569~"' {blue,at start}]
        \ular\\
j=3 & F_1^3=\divi{8}\rar\dar[leftarrow, "0.125~"' {blue,at start}]
      & F_2^3=G^3\rar\dar[leftarrow, "0.236~"' {blue,at start}]
        \ular[dotted,"?" description]
      & \red{F_3^3=H^3}\rar\dar[leftarrow, "0.317~"' {blue,at start}]
        \ular[red]
      & F_4^3\rar\dar[leftarrow, "0.380~"' {blue,at start}]
        \ular
      & F_5^3  \dar[leftarrow, "0.430~"' {blue,at start}]
        \ular \\
j=4 & F_1^4=\divi{16}\rar\dar[leftarrow, "0.062~"' {blue,at start}]
      & F_2^4=G^4\rar\dar[leftarrow, "0.145~"' {blue,at start}]
      & F_3^4=H^4\rar\dar[leftarrow, "0.216~"' {blue,at start}]
        \ular[dotted,"?" description]
      & \red{F_4^4}\rar\dar[leftarrow, "0.275~"' {blue,at start}]
        \ular[red]
      & F_5^4  \dar[leftarrow, "0.324~"' {blue,at start}]
        \ular\\
j=5 & F_1^5=\divi{32}\rar\dar[white, "0.031~"' {blue,at start}]
      & F_2^5=G^5\rar\dar[white, "0.090~"' {blue,at start}]
      & F_3^5=H^5\rar\dar[white, "0.147~"' {blue,at start}]
      & F_4^5\rar\dar[white, "0.199~"' {blue,at start}]
        \ular[dotted,"?" description]
      & \red{F_5^5} \dar[white, "0.245~"' {blue,at start}]
        \ular[red] \\[-0.5cm]
    & { } & { } & { } & { } & { }
\end{tikzcd}
\caption{The known $(F_k^j,\le)$ lattice, displayed here for $1\le j,k\le 5$.}
\label{f:Fkj-lattice}
\end{figure}

In Figure~\ref{f:Fkj-lattice}, some dotted edges with question
marks indicate conjectured inequalities.
Indeed, we conjecture that $F_k^{k+1} \ge F_{k+1}^{k+2}$ for all
$k\ge 1$, i.e., that Theorem~\ref{t:2j} may be extended to the case
$j=k+1$. Actually,
Corollary~\ref{Fk_lt_FSk_eventually} already proved that
$F_k^{k+1}(n) > F_{k+1}^{k+2}(n)$ for sufficiently large~$n$.
Thanks to Proposition~\ref{galoisbis}, this conjecture can also be
equivalently formulated as $L_{k+1}^{k+2} \ge L_{k}^{k+1}$. Said
otherwise, the inequality \eqref{e:tauj} appears to still hold as a
non-strict inequality in the case $j=k+1$. Note in this case that we may
indeed reach equality: we can
prove $L_k^{k+1}(k+1) = L_{k+1}^{k+2}(k+1)$.

After that, for $j\ge k+2$, we prove below that
$F_k^j \ngeq F_{k+1}^{j+1}$. More precisely, when $j\ge k+2$,
two behaviors seem possible: either
$k+2\le j\le 2k$, and we prove below that
$F_k^j$ and $F_{k+1}^{j+1}$ are incomparable in this case;
or $j > 2k$, in which case we conjecture that $F_k^j \le F_{k+1}^{j+1}$.
For studying these questions, we now focus on
$L_{k+1}^{j+1}(1)-L_k^j(1)$.

\begin{lemma} \label{l:Lkj1}
For all $k \ge 1$, we have
\[
L_{k+1}^{j+1}(1)-L_k^j(1) =
\begin{cases}1, &\mbox{if}\ 0 \le j \le 2k; \\[.5ex] - \frac{(j-2k-1)(j-2k+2)}{2}, & \mbox{if}\ 2k \le j \le 3k.\end{cases}
\]
In particular $L_{k+1}^{2k+2}(1) = L_{k}^{2k+1}(1)$.
Moreover for all $j \ge 2k+2$, we have $L_{k+1}^{j+1}(1) < L_k^j(1)$.
\end{lemma}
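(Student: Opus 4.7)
The plan is to denote $U_j := L_k^j(1)$ and $V_j := L_{k+1}^j(1)$, so the quantity of interest becomes $\Delta_j := V_{j+1} - U_j$. By Proposition~\ref{basicL}(c), $U_j = j+1$ for $0 \le j \le k$ and $U_j = U_{j-1} + U_{j-k}$ for $j \ge k$; likewise $V_j = j+1$ for $0 \le j \le k+1$ and $V_j = V_{j-1} + V_{j-k-1}$ for $j \ge k+1$. Subtracting these at aligned indices yields the key identity
\[
\Delta_j = \Delta_{j-1} + (V_{j-k} - U_{j-k}) \qquad \text{for all}\ j \ge k.
\]

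First I would handle the range $0 \le j \le 2k$ by induction. For $0 \le j \le k$ both base cases apply and $\Delta_j = (j+2)-(j+1) = 1$. For $k+1 \le j \le 2k$ the identity applies but $0 \le j-k \le k$ forces $V_{j-k} = U_{j-k} = j-k+1$, hence $\Delta_j = \Delta_{j-1} = 1$. This matches the stated second formula at the shared boundary $j = 2k$, since $-\tfrac{(-1)(2)}{2}=1$.

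Next, for $2k+1 \le j \le 3k$, I would establish a side identity $V_i - U_i = -(i-k)$ for all $k \le i \le 2k$ by a parallel induction. The base $V_k = U_k$ is immediate, and for $k+1 \le i \le 2k$ both recurrences give $V_i - U_i = (V_{i-1}-U_{i-1}) + (V_{i-k-1}-U_{i-k}) = -(i-1-k) - 1$, the second summand being $-1$ because $V_{i-k-1} = i-k$ and $U_{i-k} = i-k+1$ fall in their base cases. Plugging $i = j-k$ into the key identity gives $\Delta_j = \Delta_{j-1} - (j-2k)$, and telescoping from $\Delta_{2k}=1$ yields $\Delta_j = 1 - \tfrac{(j-2k)(j-2k+1)}{2}$, which an elementary rewrite (setting $s = j-2k$) shows equals $-\tfrac{(j-2k-1)(j-2k+2)}{2}$. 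In particular $\Delta_{2k+1} = 0$, i.e.\ $L_{k+1}^{2k+2}(1) = L_k^{2k+1}(1)$.

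Finally, for the claim $\Delta_j < 0$ whenever $j \ge 2k+2$: when $2k+2 \le j \le 3k$ (which requires $k \ge 2$), the closed form immediately yields $\Delta_j \le -2$ since $(j-2k-1)(j-2k+2)\ge 4$. For $j > 3k$ I would upgrade Corollary~\ref{c:1bis} (which gives $V_i \le U_i$) to the strict inequality $V_i < U_i$ for every $i \ge k+1$, by induction using the $V$ and $U$ recurrences: at the inductive step, monotonicity in the exponent (Proposition~\ref{basicL}(f)) gives $V_i - U_i \le (V_{i-1}-U_{i-1}) + (V_{i-k-1}-U_{i-k-1})$, which is $<0$ by the inductive hypothesis on the first term and Corollary~\ref{c:1bis} on the second. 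Plugged into the key identity, this forces $\Delta_j < \Delta_{j-1}$ whenever $j-k \ge k+1$, i.e.\ $j \ge 2k+1$; combined with $\Delta_{2k+1} = 0$, we conclude $\Delta_j < 0$ for all $j \ge 2k+2$. The most delicate step will be the subsidiary computation in the third paragraph, where the recurrences for $V$ (shifted by $k+1$) and $U$ (shifted by $k$) must be carefully realigned; the remaining work is routine induction.
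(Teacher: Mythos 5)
Your proposal is correct and follows essentially the same route as the paper: the same recurrence from Proposition~\ref{basicL}(c), the same decomposition $L_{k+1}^{j+1}(1)-L_k^j(1) = \bigl(L_{k+1}^{j}(1)-L_k^{j-1}(1)\bigr)+\bigl(L_{k+1}^{j-k}(1)-L_k^{j-k}(1)\bigr)$, and the same auxiliary identity $L_{k+1}^{i}(1)-L_k^{i}(1)=k-i$ on $k\le i\le 2k$, with the closed form obtained by telescoping rather than by direct inductive verification. The only divergence is the final claim, where you upgrade Corollary~\ref{c:1bis} at $n=1$ to the strict inequality $L_{k+1}^{i}(1)<L_k^{i}(1)$ for $i\ge k+1$ and descend strictly from the value $0$ at $j=2k+1$, whereas the paper inducts on $j\ge 2k+2$ with the base case requiring the separate check $L_2^5(1)=13<16=L_1^4(1)$ for $k=1$; your variant is equally valid (the base case of your strictness induction, $i=k+1$, is covered by your side identity) and in fact dispenses with that special case.
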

\begin{proof}
From Proposition~\ref{basicL}, we have
\[
L_k^j(1) = \begin{cases}j+1, & \mbox{if}\ 0 \le j \le k; \\[.5ex] L_k^{j-1}(1) + L_k^{j-k}(1), & \mbox{if}\ j \ge k.\end{cases}
\]

For $0 \le j \le k$, we have $L_{k+1}^{j}(1) = j+1 = L_k^j(1)$ and
$L_{k+1}^{j+1}(1) = j+2 = L_k^j(1) + 1$.
For $k < j \le 2k$, we have
\[
\begin{aligned}
L_{k+1}^{j+1}(1) - L_k^j(1) & = L_{k+1}^j(1) - L_k^{j-1}(1) = 1, \\
L_{k+1}^j(1) - L_k^j(1) & = L_{k+1}^{j-1}(1) - L_k^{j-1}(1) - 1 = k-j,
\end{aligned}
\]
by induction on~$j$.
Note that $ -  \frac{(j-2k-1)(j-2k+2)}{2} = 1$ for $j = 2k$.
For $2k < j \le 3k$, we have~thus
\[
L_{k+1}^{j+1}(1) - L_k^j(1)
= - \frac{(j - 2k - 2)(j - 2k + 1)}{2} + 2k - j
= - \frac{(j - 2k - 1)(j - 2k + 2)}{2},
\]
by induction on~$j$.

Still for $k\ge 1$, we now prove by induction on~$j$ that
$L_{k+1}^{j+1}(1) < L_k^j(1)$ for all $j \ge 2k+2$.
This is true for $j=2k+2$: indeed, either $k=1$ and we directly
compute $L_2^5(1) = 13 < 16 = L_1^4(1)$, or $k\ge 2$ and hence
$2k+2 = j \le 3k$
so $L_{k+1}^{j+1}(1) - L_k^j(1)$ is given by the previous
formula, which is strictly negative here. Now, for the step case of the
induction, let $j > 2k+2$. Then
\[ L_{k+1}^{j+1}(1) - L_k^j(1)
   = (L_{k+1}^{j}(1)-L_k^{j-1}(1)) + (L_{k+1}^{j-k}(1)-L_k^{j-k}(1)).
\]
The induction hypothesis on~$j-1$ indicates that the central difference
above is strictly negative, while the rightmost difference is
nonpositive thanks to Corollary~\ref{c:1bis}, allowing us to conclude
this induction.
\end{proof}

\begin{lemma}\label{l:Lkjnj}
Let $k\ge 1$. If $j\ge k+2$, the following value $n_j$
satisfies $L_k^j(n_j) > L_{k+1}^{j+1}(n_j)$:
\[
n_j = \begin{cases}2k+3-j, & \mbox{if}\ \ k+2 \le j \le 2k+2; \\[.5ex]
1, & \mbox{if}\ \ 2k+2 \le j.  \end{cases}
\]
\end{lemma}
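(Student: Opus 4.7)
The plan is to reduce both branches of the piecewise definition of $n_j$ to a single application of Lemma~\ref{l:Lkj1}, which supplies the strict inequality $L_{k+1}^{j+1}(1) < L_k^j(1)$ for every $j \ge 2k+2$. The two pieces of the definition agree at $j=2k+2$ (both give $n_j=1$), so I split the argument accordingly.

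The second branch ($j \ge 2k+2$, $n_j = 1$) is immediate: the inequality $L_k^j(1) > L_{k+1}^{j+1}(1)$ is literally the last assertion of Lemma~\ref{l:Lkj1}, so nothing more is needed.

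For the first branch ($k+2 \le j \le 2k+2$), I have $n_j = 2k+3-j$ with $1 \le n_j \le k+1$. The key observation is that since $n_j - 1 \le k$, Proposition~\ref{p:taujk_rec} gives $\tau_k^{n_j-1}(k) = k\,1\,2\cdots(n_j{-}1)$, a word of length $n_j$ that, being obtained by iterating $\tau_k$ on the initial letter of $x_k$, is a prefix of $x_k$. By the length formula $L_k^{n_j-1}(1)=n_j$ from Proposition~\ref{basicL}(c), this is precisely $x_k[0{:}n_j)$. The same reasoning applied to $x_{k+1}$ (still $n_j - 1 \le k < k+1$) gives $x_{k+1}[0{:}n_j) = \tau_{k+1}^{n_j-1}(k{+}1)$. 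Applying $\tau_k^j$ and $\tau_{k+1}^{j+1}$ and collapsing the iterates, I would then write
\[
L_k^j(n_j) = \bigl|\tau_k^{j+n_j-1}(k)\bigr| = L_k^{j+n_j-1}(1),\qquad L_{k+1}^{j+1}(n_j) = \bigl|\tau_{k+1}^{j+n_j}(k{+}1)\bigr| = L_{k+1}^{j+n_j}(1).
\]
Since $j + n_j = 2k+3$ for every $j$ in this range, the target inequality becomes $L_k^{2k+2}(1) > L_{k+1}^{2k+3}(1)$, which is once again the $j=2k+2$ instance of Lemma~\ref{l:Lkj1}.

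There is no real obstacle: both cases collapse to the same application of Lemma~\ref{l:Lkj1}, and the only point requiring care is the identification of the length-$n_j$ prefix of each fixed point with $\tau_k^{n_j-1}(k)$ (resp.\ $\tau_{k+1}^{n_j-1}(k{+}1)$), which is immediate from Propositions~\ref{p:taujk_rec} and~\ref{basicL}. It is this common reduction that explains why the author chose exactly $n_j = 2k+3-j$: it is designed so that $j+n_j-1$ and $j+n_j$ are pinned to the constants $2k+2$ and $2k+3$, bringing the problem to the first value of $j$ for which Lemma~\ref{l:Lkj1} delivers a strict inequality at $n=1$.
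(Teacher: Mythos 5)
Your proof is correct and follows essentially the same route as the paper: the $j\ge 2k{+}2$ case is the direct quote of Lemma~\ref{l:Lkj1}, and for $k{+}2\le j\le 2k{+}2$ you use $n_j=L_k^{n_j-1}(1)=L_{k+1}^{n_j-1}(1)$ (your prefix identification $x_k[0{:}n_j)=\tau_k^{n_j-1}(k)$ is just the word-level phrasing of this) to collapse everything to $L_k^{2k+2}(1) > L_{k+1}^{2k+3}(1)$, exactly as the paper does.
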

\begin{proof}
The case $j\ge 2k+2$ where $n_j=1$ is a direct use of
Lemma~\ref{l:Lkj1}.
Now suppose $k+2 \le j \le 2k+2$.
By Proposition~\ref{basicL}, we know that $L_k^p(1) = p+1$ for all
$0\le p \le k$. By considering $p=n_j-1$, we obtain
$n_j = L_k^{n_j-1}(1)$ and hence
\[
L_k^j(n_j) = L_k^j(L_k^{n_j-1}(1)) = L_k^{j+n_j-1}(1) = L_k^{2k+2}(1).
\]
Similarly, $n_j = L_{k+1}^{n_j-1}(1)$ and
$L_{k+1}^{j+1}(n_j) = L_{k+1}^{2k+3}(1)$, which is strictly less than
$L_k^{2k+2}(1)$ by Lemma~\ref{l:Lkj1}.
\end{proof}

In particular, this last lemma implies that
$L_{k}^j \nleq L_{k+1}^{j+1}$
when $j \ge k+2$, with $n_j$ as counterexample.
Thanks to Proposition~\ref{galoisbis},
this means equivalently that $F_k^j \ngeq F_{k+1}^{j+1}$ when
$j \ge k+2$,
with $n=L_k^j(n_j)$ as counterexample. Moreover, when
$k+2\le j\le 2k$, Lemma~\ref{l:Lkj1} implies that
$L_k^j \ngeq L_{k+1}^{j+1}$, with $n=1$ as counterexample, and
hence that $L_k^j$ and $L_{k+1}^{j+1}$ are incomparable, and
equivalently that $F_k^j$ and $F_{k+1}^{j+1}$ are also incomparable.

We conclude this section with a last conjecture about~$F_k$, as always
for $k\ge 1$.
From Corollary~\ref{Fk_lt_FSk_eventually}, we know that
the inequality $F_k(n) \le F_{k+1}(n)$ becomes strict when
$n$ is large enough.
Actually, we conjecture an explicit bound
$N_k = \tfrac{1}{2}(k+1)(k+6)$, for which
$F_k(n) < F_{k+1}(n)$ as soon as $n>N_k$.
At least, it can be proved that $F_k(N_k) = F_{k+1}(N_k)$, so the bound
cannot be less than~$N_k$, but it
remains to be confirmed that no equality occurs after~$N_k$.
These constants~$N_k$
can also be expressed as $\tfrac{1}{2}(k+3)(k+4) - 3$ and satisfy
$N_{k+1} = N_k + (k+4)$. In particular $N_1 = 7$ and $N_2 = 12$ and
$N_3 = 18$.
Interestingly, we also have $L_{k+1}(N_k) = L_{k+2}(N_k)$.
Finally, this conjecture implies two other interesting statements:
\begin{itemize}
\item $F_k(n) < F_{k+1}(n+1)$ for all $n \ge 2$;
\item $L_{k+1}(n) > L_{k+2}(n)$ for all $n > N_k$.
\end{itemize}

\section{More letter counting}
\label{s:morecount}

Several results and conjectures of the previous section can be rephrased
into statements about the number of occurrences of certain letters in words~$x_k$. In particular,
let us consider again the letter~$1$ and study~$C_k^{(=1)}$.

\begin{proposition}\label{p:C1decr}
For all $k\ge 1$ and $n\ge 0$, we have
$C_k^{(=1)}(n) \ge C_{k+1}^{(=1)}(n)$.
\end{proposition}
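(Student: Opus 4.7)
The plan is to reduce this directly to the already-established monotonicity of the functions $F_k$ (Theorem~\ref{t:2}) using the letter-counting formula of Proposition~\ref{FC1}. The first thing I would do is split on the value of $k$, handling $k=1$ as a trivial base case before invoking the main identity $F_k(n) = n - C_k^{(=1)}(n)$, which is only stated for $k > 1$.

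For $k = 1$, the word $x_1 = (1)^\infty$ consists entirely of the letter~$1$, so $C_1^{(=1)}(n) = n$. Since $C_{k+1}^{(=1)}(n) \le n$ trivially (it is a count over a prefix of length $n$), the inequality $C_1^{(=1)}(n) \ge C_2^{(=1)}(n)$ holds immediately.

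For $k \ge 2$, both $k$ and $k+1$ exceed~$1$, so Proposition~\ref{FC1} applies to both, giving
\[
C_k^{(=1)}(n) = n - F_k(n) \quad \text{and} \quad C_{k+1}^{(=1)}(n) = n - F_{k+1}(n).
\]
Subtracting, we get $C_k^{(=1)}(n) - C_{k+1}^{(=1)}(n) = F_{k+1}(n) - F_k(n)$, which is nonnegative by Theorem~\ref{t:2}.

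There is no real obstacle here: the whole content is packaged into Theorem~\ref{t:2}, and once Proposition~\ref{FC1} converts the letter count into $n - F_k(n)$ the inequality flips sign correctly and the conclusion is immediate. The only mild care needed is to treat $k=1$ separately, since Proposition~\ref{FC1} was stated with the hypothesis $k > 1$.
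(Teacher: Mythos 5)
Your proof is correct and follows the paper's own argument exactly: the case $k>1$ via Proposition~\ref{FC1} combined with Theorem~\ref{t:2}, and the case $k=1$ via $C_1^{(=1)}(n)=n$. Nothing to add.
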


\begin{proof}
When $k \ge 2$, this is a consequence of
Proposition~\ref{FC1} and Theorem~\ref{t:2}. For $k=1$, we have $C_1^{(=1)}(n)= n \ge C_2^{(=1)}(n)$.
\end{proof}

Considering the letter~$2$, we conjecture that
$C_k^{(=2)}(n) \ge C_{k+1}^{(=2)}(n)$ for all $k \ge 3$, $n\ge 0$.
In particular, this is a consequence of the conjecture
$F_k^{k+1} \ge F_{k+1}^{k+2}$ mentioned in the previous section:
when combining it with Equation~\eqref{e:FCi}, we get
\[
C_k^{(=2)}(n) \ge C_{k+1}^{(=2)}(n) \quad \mbox{for all}\ k>2
\ \mbox{and}\ n\ge 2;
\]
we can then relax the condition on $n$ because all these quantities
are zero when $n$ is $0$ or~$1$. (In
the latter case, $x_k[0] = k \neq 2$ and similarly for~$x_{k+1}$.)

Actually, the property $C_k^{(=2)} \ge C_{k+1}^{(=2)}$ also
extends to $k=2$, and is easy to prove
in this case, since for all $n\ge 0$, we have
$C_2^{(=2)}(n) \ge C_3^{(=3)}(n)$
(by Equation~\ref{e:tau}, trivially extended to $n=0$) as well as
$C_3^{(=3)}(n) \ge C_3^{(=2)}(n)$; indeed, in~$x_3$, any occurrence of the letter~$2$ is in a subword~$312$.
We cannot extend further: for $k=1$, there is no letter~$2$ in the word~$x_1$, hence $C_1^{(=2)}(n) = 0$ while $C_2^{(=2)}(n) = F_2(n)$ by Equation~\ref{e:FCk}, which differs from~$0$ as soon as $n \ge 1$.

For the letters $3\le i < k$, there is no pointwise monotonicity
between $C_k^{(=i)}$ and $C_{k+1}^{(=i)}$:

\begin{proposition}
For $3\le i < k$, we have
\begin{itemize}
\itemsep.5ex
\item $C_k^{(=i)}(n) < C_{k+1}^{(=i)}(n)$ when $n = i + L_k^{2k+2}(1)$,
\item $C_k^{(=i)}(n) > C_{k+1}^{(=i)}(n)$ when $n = i + L_{k+1}^{k+i}(1)$.
\end{itemize}
\end{proposition}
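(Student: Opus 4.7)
The plan is to reduce both inequalities to the $L$-function comparisons in Lemmas~\ref{l:Lkj1} and~\ref{l:Lkjnj}. The route is: first use Equation~\eqref{e:FCi} to rewrite each letter count as an iterate of $F$, and then use the Galois connection of Corollary~\ref{galois} to translate the resulting $F$-inequality into an inequality between $L$-values.

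Fix $3 \le i < k$. Since $1 \le i < k < k{+}1$, Equation~\eqref{e:FCi} gives the identities $C_k^{(=i)}(m{+}i) = F_k^{k+i-1}(m)$ and $C_{k+1}^{(=i)}(m{+}i) = F_{k+1}^{k+i}(m)$, so both assertions take the form of a strict comparison between $F_k^{k+i-1}(m)$ and $F_{k+1}^{k+i}(m)$ for a specific value of~$m$.

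For the first assertion, I would set $m = L_k^{2k+2}(1)$ and factor $L_k^{2k+2} = L_k^{k+i-1} \circ L_k^{k-i+3}$. Since $4 \le k{-}i{+}3 \le k$, Proposition~\ref{basicL}(c) yields $L_k^{k-i+3}(1) = k{-}i{+}4$, hence by Theorem~\ref{t:1} we get $F_k^{k+i-1}(m) = k{-}i{+}4$. Through the Galois connection, the target $F_{k+1}^{k+i}(m) > k{-}i{+}4$ is equivalent to $L_k^{k+i-1}(k{-}i{+}4) > L_{k+1}^{k+i}(k{-}i{+}4)$, which is exactly Lemma~\ref{l:Lkjnj} with $j = k{+}i{-}1$: the hypothesis $3 \le i < k$ places $j$ in $[k{+}2,\,2k{-}2]$, so the lemma's parameter $n_j = 2k{+}3{-}j = k{-}i{+}4$ matches the chosen argument.

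For the second assertion, I would set $m = L_{k+1}^{k+i}(1)$, so that $F_{k+1}^{k+i}(m) = 1$ directly by Theorem~\ref{t:1}. The Galois connection then converts $F_k^{k+i-1}(m) > 1$ into $m > L_k^{k+i-1}(1)$, i.e., $L_{k+1}^{k+i}(1) > L_k^{k+i-1}(1)$, which is the unit gap provided by Lemma~\ref{l:Lkj1} for $j = k{+}i{-}1 \le 2k{-}2 \le 2k$. The only step requiring care is the factorization $L_k^{2k+2} = L_k^{k+i-1} \circ L_k^{k-i+3}$ in the first half, which is engineered precisely so that the argument $L_k^{k-i+3}(1)$ coincides with $n_j$ in Lemma~\ref{l:Lkjnj}; once that matching is spotted, no serious obstacle remains.
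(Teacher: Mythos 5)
Your proposal is correct and follows essentially the same route as the paper: both reduce the statement, via Equation~\eqref{e:FCi} with $j=k{+}i{-}1$ and the Galois-connection machinery, to the two $L$-comparisons of Lemmas~\ref{l:Lkj1} and~\ref{l:Lkjnj}, using the same identification $L_k^{k+i-1}(k{+}4{-}i)=L_k^{2k+2}(1)$. The only cosmetic difference is that you unpack Proposition~\ref{galoisbis} into direct uses of Corollary~\ref{galois} and Theorem~\ref{t:1}, whereas the paper cites Proposition~\ref{galoisbis} wholesale.
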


\begin{proof}
After Lemma~\ref{l:Lkjnj}, we noticed that $F_k^j$ and~$F_{k+1}^{j+1}$
are incomparable for all $j\ge k+2$. Some
counterexamples are $L_k^j(n_j)$ in one direction and $L_{k+1}^{j+1}(1)$
in the other (by Proposition~\ref{galoisbis}).
Now we use Equation~\eqref{e:FCi} to express this in terms of the number of occurrences of letters, by choosing $j = k+i-1$. When $3\le i<k$, we indeed have
$j \ge k+2$ (and also $j \le 2k + 2$).
Due to the shape of Equation~\eqref{e:FCi}, the
previous counterexamples are now shifted by $i$. Moreover, here
\[
n_j = 2k+3-j = k+4-i = L_k^{k+3-i}(1),
\]
and hence
\[
L_k^j(n_j) = L_k^{k+i-1}(L_k^{k+3-i}(1)) = L_k^{2k+2}(1). \qedhere
\]
\end{proof}

For instance, for $k=5$ and $i=4$, we have
$C_5^{(=4)}(49) = 5 < 6 = C_6^{(=4)}(49)$ while
$C_5^{(=4)}(20) = 2 > 1 = C_6^{(=4)}(20)$.

Finally, we compare these quantities for sufficiently large~$n$.
Let $k\ge 1$ and $1\le i < k$.
By Corollary~\ref{Fk_lt_FSk_eventually} and
Equation~\eqref{e:FCi}, we obtain that
$C_k^{(=i)} >_\infty C_{k+1}^{(=i)}$ at least when
$\alpha_k \ge 1 - \frac{1}{i}$, which in particular happens when
$(i-1)^2 \le k$. Note that this condition is always satisfied
when $i=1$ or $i=2$. Otherwise, for $i\ge 3$, small values
of~$k$ may exhibit the opposite infinitary behavior. For example
$C_6^{(=5)} <_\infty C_{7}^{(=5)}$ since
\[
\lim_{n\to\infty} \tfrac{1}{n} C_6^{(=5)}(n) = \alpha_6^{10} \approx 0.0813, \quad \lim_{n\to\infty} \tfrac{1}{n} C_7^{(=5)}(n) = \alpha_7^{11} \approx 0.0819.
\]

\section{Acknowledgments}
The authors are deeply grateful to Yining Hu, who made this joint work
possible. We also thank the anonymous referees for their helpful remarks.

The third author was supported by the ERC grant DynAMiCs (101167561) of the European Research Council, the bilateral grant SYMDYNAR (ANR-23-CE40-0024 and FWF I 6750) of the Agence Nationale de la Recherche and the Austrian Science Fund, and by the ANR project IZES (ANR-22-CE40-0011).

\bigskip
\hrule
\bigskip

\noindent 2020 {\it Mathematics Subject Classification}:
Primary 11B37; Secondary 11B39, 68R15, 68V15.

\noindent \emph{Keywords: } Hofstadter sequence, Fibonacci word, theorem prover, proof assistant, Rocq.

\bigskip
\hrule
\bigskip

\noindent (Concerned with sequences
\seqnum{A005206},
\seqnum{A005374},
\seqnum{A005375},
\seqnum{A005376}, and
\seqnum{A100721}.)

\bigskip
\hrule
\bigskip

\vspace*{+.1in}
\noindent
Received September 27 2024; 
revised versions  received April 17 2026; May 15 2026.
Published in {\it Journal of Integer Sequences}, May 19 2026.

\bigskip
\hrule
\bigskip

\noindent
Return to \href{https://cs.uwaterloo.ca/journals/JIS/}{Journal of Integer Sequences home page}.
\vskip .1in

\end{document}